\documentclass[11pt]{article}

\usepackage[margin=2.5cm]{geometry}
\usepackage{authblk}

\usepackage{amssymb,amsmath,amsthm,amsfonts,thmtools}
\usepackage{color}
\usepackage[export]{adjustbox}
\usepackage{yfonts}
\usepackage{mathrsfs}

\usepackage{graphicx}
\usepackage{xspace}

\usepackage{times}
\usepackage{verbatim}
\usepackage{enumitem}
\usepackage{graphicx,wrapfig,lipsum}
\usepackage[font=small]{caption}
\usepackage{subcaption}
\usepackage[english]{babel}
\usepackage[utf8]{inputenc}

\usepackage[noend]{algpseudocode}
\usepackage{url}
\usepackage[all]{xy}
\usepackage{rotating}
\usepackage{ifpdf}
\usepackage{tcolorbox}
\usepackage{lipsum}

\usepackage{mdframed}             
\usepackage{xifthen} 
\usepackage{mathtools}  

\DeclareMathAlphabet\mathbfcal{OMS}{cmsy}{b}{n}

\usepackage{comment}
\usepackage[T1]{fontenc}


\usepackage[hidelinks]{hyperref} 
\usepackage{cleveref}

\usepackage[section]{placeins}
\usepackage{float}
\usepackage[title]{appendix}
\usepackage{graphicx}
\usepackage{tikz}
\usetikzlibrary{automata,positioning}
\usepackage{dsfont}
\usepackage{xspace}
\usepackage{enumitem}
\setdescription{font=\normalfont}
\pdfminorversion=7




\colorlet{darkgreen}{green!45!black}



\newcommand{\ProtBDP}{{\textsf{BDP}}}
\newcommand{\ProtRDP}{{\textsf{RDP}}}

\newcommand{\protP}{{\mathbb{P}}}
\newcommand{\leader}{\ell}

\newcommand{\SingleHashLock}{{\text{SHL}}}

\newcommand{\outcome}{{\omega}}
\newcommand{\outcomepair}[2]{\angled{#1\,|\,#2}}

\newcommand{\Deal}{\textsc{Deal}\xspace}
\newcommand{\Dealv}[1]{{\textsc{Deal}_{#1}}\xspace}

\newcommand{\NoDeal}{\textsc{NoDeal}\xspace}
\newcommand{\NoDealv}[1]{{\textsc{NoDeal}_{#1}}\xspace}

\newcommand{\Discount}{\textsc{Discount}\xspace}
\newcommand{\Discountv}[1]{{\textsc{Discount}_{#1}}\xspace}

\newcommand{\FreeRide}{\textsc{FreeRide}\xspace}
\newcommand{\FreeRidev}[1]{{\textsc{FreeRide}_{#1}}\xspace}

\newcommand{\Underwater}{\textsc{Underwater}\xspace}



\newtheorem{theorem}{Theorem}
\newtheorem{lemma}{Lemma}
\newtheorem{corollary}[lemma]{Corollary}
\newtheorem{claim}[lemma]{Claim}



\newenvironment{theoremrestated}[1]{\medskip\noindent\textbf{Theorem~#1.}\it\ }
{\medskip}


\newcommand{\ignore}[1]{}

\newcommand{\margincomment}[2]%
{\marginpar{\footnotesize\raggedright {\color{red}#1}: #2}}




\newcommand{\myparagraph}[1]{{\medskip\noindent\textbf{#1}}}

\newcommand{\mycase}[1]{{\underline{Case~#1}:}}


\newcommand{\braced}[1]{{ \left\{ {#1} \right\} }}

\newcommand{\angled}[1]{{ \left\langle {#1} \right\rangle }}

\newcommand{\suchthat}{{\;:\;}}

\newcommand{\onestep}{{\textbullet}}


\newcommand{\NP}{\ensuremath{\mathbb{NP}}\xspace}

\newcommand{\controlgraph}{\cal{K}}


\newcommand{\inNeighbors}[1]{{N_{#1}^{in}}}
\newcommand{\outNeighbors}[1]{{N_{#1}^{out}}}
\newcommand{\inArcs}[1]{{A_{#1}^{in}}}
\newcommand{\outArcs}[1]{{A_{#1}^{out}}}
\newcommand{\outcomein}[1]{{\outcome_{#1}^{in}}}
\newcommand{\outcomeout}[1]{{\outcome_{#1}^{out}}}

\newcommand{\maxDfromLeader}[1]{{D^-_{#1}}}
\newcommand{\maxDfromSubleader}[1]{{B^-_{#1}}}
\newcommand{\maxDtoLeader}[1]{{D^+_{#1}}}
\newcommand{\ContractCreationTime}{{D^\ast}}
\newcommand{\RContractCreationTime}{{B^\ast}}


\newcommand{\ListLengths}{\setlength{\itemsep}{0ex}\setlength{\topsep}{1ex}\setlength{\partopsep}{0ex}}






\title{On HTLC-Based Protocols for Multi-Party Cross-Chain Swaps\thanks{Research supported by NSF grant CCF-2153723.}}

\author[1]{Emily Clark}
\author[1]{Chloe Georgiou}
\author[1]{Katelyn Poon}
\author[1]{Marek Chrobak}

\affil[1]{University of California at Riverside}


\begin{document}

\maketitle

\begin{abstract}
In his 2018 paper, Herlihy introduced an atomic protocol for multi-party asset swaps across 
different blockchains. His model represents an asset swap by a directed graph whose 
nodes are the participating parties and edges represent asset transfers, and
rational behavior of the participants is captured by a preference relation between a protocol's outcomes. 
Asset transfers between parties are achieved using smart contracts.
These smart contracts are quite intricate and involve solving computationally intensive graph problems,
limiting practical significance of his protocol. 
His paper also describes a different protocol
that uses only standard hash time-lock contracts (HTLC's), but this simpler protocol applies 
only to some special types of digraphs. He left open the question whether there is
a simple and efficient protocol for cross-chain asset swaps in arbitrary digraphs. Motivated by this open problem,
we conducted a comprehensive study of \emph{HTLC-based protocols}, in which all asset transfers are implemented with HTLCs.
Our main contribution is a full characterization of swap digraphs that have such protocols.
\end{abstract}


\section{Introduction}
\label{sec: introduction}



In his 2018 paper~\cite{Herlihy18}, Herlihy introduced a model for multi-party asset swaps across different blockchains. 
In this model an asset swap is represented by a strongly connected directed graph (digraph), with each vertex corresponding to one 
party and each arc representing
a pre-arranged asset transfer between two parties. Given such a digraph, the goal is to design a protocol to implement
the transfer of all assets. There is no guarantee that all parties will follow the protocol --- some parties may attempt to cheat, even
colluding between themselves, or behave irrationally. The protocol must guarantee, irrespective of the
behavior of other parties, that
each honest party will end up with an outcome that it considers acceptable. The protocol should also discourage
cheating, so that any coalition of parties cannot improve its outcome by deviating from the protocol.
These two conditions are called \emph{safety} and \emph{strong Nash equilibrium}, respectively.
A protocol that satisfies these conditions is called \emph{atomic} (see the formal definitions in Section~\ref{sec: multi-party asset swaps}).

In this model, Herlihy~\cite{Herlihy18} developed an atomic protocol for multi-party asset swaps
in arbitrary strongly connected digraphs. 
This result constitutes a major advance in our understanding of fundamental principles of asset swaps, as 
it shows that it is possible, in principle, 
to securely implement complex transactions involving multiple parties even if their assets are hosted on multiple blockchains. 
Yet the practical significance of his protocol is limited due to its intricacy and high computational complexity.
Asset transfers in this protocol are realized via smart contracts that need to store and process
the whole swap digraph, and designing these contracts involves solving $\NP$-hard graph problems, like
finding long paths or a small feedback vertex set.
To achieve the  safety property, it uses an elaborate cryptographic scheme involving nested digital signatures.
Another concern is privacy: since the execution of the protocol requires the parties to know the topology of 
the whole swap digraph, they also receive information about asset transfers between other parties.

Herlihy's paper~\cite{Herlihy18} also mentions a simplified and more efficient version of his protocol.
In terms of cryptographic tools, this simpler protocol uses only standard smart contracts
called \emph{hash time-lock contracts} (HTLC's), that require only one secret/hashlock pair
and a time-out mechanism. This protocol, however, has only limited applicability: it works correctly only 
for special types of digraphs, that we call \emph{bottleneck digraphs} in our paper\footnote{%
Herlihy calls such graphs \emph{single-leader digraphs.}}. These are strongly connected digraphs in which all cycles
overlap on at least one vertex.

This raises a natural question, already posed in~\cite{Herlihy18}: \emph{Is there a simple and efficient protocol
for multi-party asset swaps that is atomic and works on all strongly connected digraphs?}


\smallskip

\emph{Our contributions.}
Our work was motivated by the above question. Specifically, the question we address is this:
\emph{Is it possible to solve the cross-chain swap problem
with an atomic protocol that is computationally efficient and uses only HTLCs for asset transfers?} 
We show that, while such HTLC-based protocols are not possible for arbitrary digraphs, 
the class of digraphs that have such protocols is much broader than just bottleneck digraphs.
In fact, we give a complete characterization of digraphs that have HTLC-based protocols.
We call them \emph{reuniclus graphs}\footnote{Graphical representations of
these digraphs resemble visually a type of Pokemon called Reuniclus.}.  A reuniclus graph consists of a number of (weakly)
biconnected components, with each component being an induced bottleneck subgraph. These components form a
hierarchical tree-like structure, where the bottleneck of each non-root component 
is an articulation vertex belonging to its parent component.
(See the formal definition in Section~\ref{subsec: protocol for reuniclus digraphs}).

To establish our characterization, we need to formalize the concepts of HTLCs and
HTLC-based protocols. In our model, an HTLC is a smart contract used to transfer an asset from
its seller to its buyer that (roughly) works like this: After it's created by the seller, the contract stores (escrows)
this asset secured with a hashlock value and with a timeout (expiration) value, both provided by the seller. 
To claim this asset, the buyer must provide a secret value that matches the hashlock. 
If the asset is not successfully claimed by the buyer before its timeout,
the contract automatically returns the asset to the seller. In a HTLC-based protocol, 
all asset transfers are implemented with HTLCs, and
each party is allowed to generate one secret/hashlock pair, with the hashlock value obtained from
the secret value using a one-way permutation.

\smallskip

In this terminology, our main contribution can be stated as follows:

\begin{theorem}\label{thm: main theorem}
A swap digraph $G$ has an atomic HTLC-based protocol if and only if $G$ is a reuniclus digraph.
\end{theorem}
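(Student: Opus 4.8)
The plan is to prove the two directions of the equivalence separately. The \emph{sufficiency} direction --- every reuniclus digraph admits an atomic HTLC-based protocol --- is constructive, whereas the \emph{necessity} direction --- only reuniclus digraphs do --- is an impossibility result. For sufficiency, I would exploit the recursive tree structure of a reuniclus digraph $G$: decompose $G$ into its weakly biconnected components, each of which is by definition an induced bottleneck subgraph, organized in a hierarchy whose connecting vertices are articulation points. On a single bottleneck component, Herlihy's simplified protocol~\cite{Herlihy18} already yields an atomic HTLC-based protocol: once all assets are escrowed, publishing the secret of the component's bottleneck (``leader'') vertex triggers a cascade of claims completing every transfer in that component. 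The core of the construction is to compose these per-component protocols along the tree into a single protocol, using a layered timeout schedule consistent with the hierarchy so that the whole swap either completes or unwinds coherently.

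With the construction in place, I would verify the two required properties. Safety follows from the standard HTLC invariant applied layer by layer: the layered timeout schedule ensures that whenever a party's escrowed outgoing asset becomes claimable because the relevant secret has been published, that same secret also lets the party claim its incoming asset within its own timeout window; a party whose counterpart never reveals the secret is simply refunded, so no honest party ends strictly worse off. The strong Nash equilibrium property is more delicate, since a coalition might try to exploit the seams between components. Here I would argue that, because each articulation vertex is the unique bottleneck shared between a child component and its parent, any profitable deviation would already have to be profitable within a single bottleneck component, contradicting the equilibrium guarantee of Herlihy's protocol there.

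For necessity, suppose $G$ is \emph{not} a reuniclus digraph; I would show that no atomic HTLC-based protocol exists. The first step is to isolate a canonical obstruction: if $G$ violates the reuniclus condition, then some weakly biconnected component fails to be a bottleneck subgraph, so it contains two cycles sharing no common vertex. I would then mount an adversarial argument built on the single defining limitation of HTLC-based protocols --- each party may generate only one secret/hashlock pair, so the publication of any secret is an ``all-or-nothing'' event and the number of independent unlocking events is bounded by the number of parties. Using the two vertex-disjoint cycles, I would construct a coalition together with a schedule of secret revelations for which this rigid information flow cannot simultaneously (i)~refund every honest party whose counterpart defects and (ii)~prevent the coalition from claiming an incoming asset while withholding its own outgoing transfer. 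Either outcome violates safety or the strong Nash equilibrium condition.

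I expect the necessity direction to be the main obstacle, and within it the crux is making the impossibility \emph{tight}: the adversarial construction must fail exactly when the reuniclus condition fails and no sooner. This demands a precise understanding of how the weak-biconnectivity decomposition, the per-component bottleneck condition, and the one-secret-per-party constraint interact, so that two vertex-disjoint cycles inside a single biconnected component --- but not the benign sharing of an articulation vertex across components --- are what defeat every candidate protocol.
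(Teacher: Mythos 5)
Your sufficiency direction follows the paper's plan (the paper's Protocol~$\ProtRDP$ is exactly such a hierarchical composition of per-component bottleneck protocols), but your necessity direction rests on a structural claim that is false, and this is where the real content of the theorem lies. You assert that if $G$ is not reuniclus, then some weakly biconnected component fails to be a bottleneck subgraph and therefore \emph{contains two vertex-disjoint cycles}. Both steps of this reduction fail. First, a non-bottleneck component need not contain two disjoint cycles: take the complete digraph on three vertices $u,v,w$ (all six arcs). Its three $2$-cycles have no common vertex, so it has no bottleneck and is not reuniclus; yet any two of its cycles intersect, since two disjoint cycles would need at least four vertices. Second, $G$ can fail to be reuniclus even when \emph{every} biconnected component is a bottleneck digraph: let $B$ be the digraph with arcs $x\to y$, $y\to z$, $z\to x$, $y\to w$, $w\to x$, whose only bottleneck vertices are $x$ and $y$ (the vertex $z$ lies off the cycle $x\to y\to w\to x$); glue two disjoint copies $B_1,B_2$ of $B$ by identifying their $z$-vertices. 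Each block $B_i$ is a bottleneck digraph, but in any covering decomposition the two components containing $B_1$ and $B_2$ intersect exactly in $z$, so condition~(rg2) would force $z$ to be the bottleneck of one of them, which it is not. Hence $G$ is not reuniclus, yet the only pairs of disjoint cycles straddle the articulation vertex $z$ --- precisely the configuration your last paragraph declares ``benign.'' So your canonical obstruction neither exists in all non-reuniclus graphs nor is confined to single blocks, and the adversarial scheduling argument you would build on it (which is itself only sketched) cannot get started.

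The paper's necessity proof avoids obstruction-hunting entirely and argues in the opposite direction: it takes an arbitrary atomic HTLC-based protocol and extracts the reuniclus structure \emph{from the protocol itself}. The key steps are: all contracts must be created before any asset is claimed (Lemma~\ref{lem: all contracts placed before claiming}); every cycle must contain a contract protected by its own seller (Corollary~\ref{cor: paths and cycles creation times}); each party's contracts are protected by at most one party other than itself (Theorem~\ref{thm: local properties}); and protection propagates along paths (Theorem~\ref{thm: simple paths to/from x}). One then defines ``$x$ controls $u$'' when some contract of $u$ carries $x$'s hashlock, proves this control digraph is a tree, and takes as components the subgraphs of contracts protected by each hashlock-creating party; conditions~(rg1) and~(rg2) follow. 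This is why the characterization comes out tight with no case analysis on cycle configurations. Separately, on the sufficiency side your handling of strong Nash equilibrium (localizing coalition deviations to one component and invoking ``Herlihy's equilibrium guarantee there'') is unproven and unnecessary: the paper's Lemma~\ref{lem: live+safe->nash} shows that for strongly connected digraphs liveness and safety already imply the strong Nash property, so atomicity of the composed protocol reduces to checking those two properties alone.
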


The proof of Theorem~\ref{thm: main theorem} conists of two parts. The sufficiency condition
is proved in Section~\ref{sec: an atomic protocol for reuniclus digraphs}, 
where we provide an HTLC-based atomic protocol for  asset swaps in reuniclus digraphs. 
Our most technically challenging contribution is the proof of the necessity condition,
presented in Section~\ref{sec: characterization of digraphs that admit shl protocols}.
This proof shows that non-reuniclus digraphs do not have atomic HTLC-protocols for asset swaps.

Along the way we also obtain a full characterization of digraphs that admit HTLC-protocols
when only one party is allowed to create a secret/hashlock pair.
Herlihy~\cite{Herlihy18} showed that bottleneck graphs have such
protocols. We complement this result by proving that the bottleneck property is also necessary. 

Our asset-swap model is in fact a slight generalization of the one in~\cite{Herlihy18}, as it uses
a relaxed definition of the preference relation, which allows
each party to customize some of the preferences between its outcomes. 
As we explain in Section~\ref{sec: multi-party asset swaps}, all crucial properties of the model in~\cite{Herlihy18} remain valid. 
Among these, the most critical is given in Lemma~\ref{lem: live+safe->nash}, which says that in this model
the strong Nash equlibrium property is in essence a consequence of the safety property of a protocol;
thus, in a way, it comes ``for free''.


\smallskip
\emph{Related work.}
The problem of securely exchanging digital products between two untrustful parties
has been studied since 1990s under the name of \emph{fair exchange}. As simultaneous
exchange is not feasible in a typical electronic setting, protocols for fair exchage 
rely on a trusted party and use cryptographic tools to reduce the need for
this trusted party to interfere
-- see for example~\cite{micali03,franklin98,probfairexchange,asokan97,asokan1997}.
In the asset-swap model from~\cite{Herlihy18}, adopted in our paper, one can
think of smart contracts as playing the role of trusted parties.

Starting in early 2000's, with different users holding assets on a quickly growing number of different blockchains, 
cross-chain interoperability tools became necessary to allow these users to trade their assets.
An atomic swap concept was one of the proposed tools to address this issue. The concept itself
and some early implementations of asset-swap protocols (see, for example~\cite{2013_nolan_atomictransfertalk})
predate the work of Herlihy~\cite{Herlihy18}, although his paper was first to provide 
systematic treatment of such protocols and extend it to multiple parties.

In recent years there has been intensive research activity aimed at improving
various aspects of asset-swap protocols. The preference relation of the participants in the model from~\cite{Herlihy18}
is very rudimentary, and some
refinements of this preference model were studied in~\cite{2023_chan_etal_with_preferences,2019_imoto_etal_cross_chain_improved}.
Some proposals~\cite{mazumdar2022faster,2021_xue_herlihy_hedging} address the issue of `` griefing'', when one party needs to wait for the counter-party to
act, while its assets are locked and unaccessible. 
Other directions of study include investigations of protocol's time and space complexity~\cite{2019_imoto_etal_cross_chain_improved},
privacy issues~\cite{deshpande2020privacy}, and generalizations of swaps to more
complex transactions~\cite{herlihy2019cross,thyagarajan2021universal,heilman2020arwen}.



\section{Graph Terminology}
\label{sec: graph terminology}



Let $G = (V,A)$ be a digraph with vertex set $V$ and arc set $A$.
Throughout the paper we assume that digraphs do not have self-loops or parallel arcs.
If $(u,v)\in A$ is an arc, we refer to $v$ as an \emph{out-neighbor} of $u$ and
to $u$ as an \emph{in-neighbor} of $v$.
By $\inNeighbors{v}$ we denote the set of in-neighbors of $v$, by $\inArcs{v}$
its set of in-coming arcs,
$\outNeighbors{v}$ are the out-neighbors of $v$ and $\outArcs{v}$ are its outgoing arcs.

To simplify notation, for a vertex $v$ we will sometimes write $u\in G$ instead
of $v\in V$. Similarly, for an arc $(u,v)$, notation $(u,v)\in G$ means that
$(u,v)\in A$.

We use standard terminology for directed graphs. A path in $G$ is a sequence of
vertices, each (except last) having an arc to the next one, without repeating an arc.
A cycle is a path in which the last vertex is equal to the first one.
A \emph{simple path} is a path that does not repeat a vertex. 
A \emph{simple cycle} is a cycle where all vertices are different except the
first and last being equal.

A digraph $H = (V',A')$ is a subgraph of $G$ if $V'\subseteq V$ and $A'\subseteq A$.
$H$ is called an \emph{induced subgraph} of $G$ if $A'$ contains all arcs from $A$
connecting vertices in $V'$. The concept of induced subgraphs will be important. 

Recall that a digraph $G$ is called \emph{strongly connected} if for every pair of
vertices $u$ and $v$ there is a path from $u$ to $v$. We are mostly
interested in strongly connected digraphs.

A set $U\subseteq V$ of vertices is called a \emph{feedback vertex set} in a digraph $G$
if every cycle in $G$ includes at least one vertex from $U$. 

For a digraph $G$ we will sometimes consider its undirected structure.
A vertex $v$ is caled an \emph{articulation vertex} of $G$ if  
removing $v$ from (the undirected version of) $G$ creates a graph that has
more connected components than $G$. We say that $G$ is \emph{biconnected} 
if $G$ is (weakly) connected and has no articulation vertices.



\section{Multi-Party Asset Swaps}
\label{sec: multi-party asset swaps}



In broad terms, the \emph{multi-party asset swap problem} we want to solve is this: There is a collection $V$ of parties.
Each party has a collection of assets that it wishes to exchange for some other set of assets.
For example, Alice may own a pair of skis and a pair of ski boots that she would like to exchange for
an iPad and a pair of earbuds.
(In practice, these are typically digital assets, like crypto-currency, but we can also
consider a scenario where ownership titles to physical assets are swapped.) Suppose that 
there is a way to match these exchanges perfectly, namely to reassign assets from its current owner to its 
new owner in such a way that each party will be assigned exactly their specified collection of desired assets.
This reassignment is called a \emph{multi-party asset swap}.
We now would like to design a protocol to arrange the transfers of all assets in this swap. We do not assume
that the parties are honest. Some parties may not follow the rules of the protocol, attempting to achieve an outcome
better than the one they originally specified. (For example, they may end up with an extra new asset,
or retain some of their own assets.) Other parties may just behave irrationally. 
To address this, the asset-swap protocol must satisfy, at the minimum, the following two properties:
(i) if all parties follow the protocol then all prearranged asset transfers will take place, and 
(ii) an outcome of any honest party (that follows the protocol) is guaranteed to be
acceptable (not worse than its initial holdings),
even if some other parties deviate from the protocol or collaboratively attempt to cheat.


\paragraph{Clearing service.}
We assume the existence of a \emph{market clearing service}. Each party submits its proposed
exchange (the two collections of its current and desired assets) to this clearing service. 
If a multi-party swap is possible, the clearing service 
constructs a digraph $G = (V,A)$ representing this swap. Each arc $(u,v)$ of $G$ represents 
the intended transfer of one asset from its current owner $u$ to its new owner $v$.
(In the context of this transfer, we refer to $u$ as ``the seller'' and to $v$ as ``the buyer''.)
For simplicity, we assume that any party can transfer only one asset to any other party.
This way $G$ does not have any parallel arcs. In our notations
we will also identify assets with arcs, so $(u,v)$ denotes both an arc of $G$
and the asset of $u$ to be transferred to $v$.
The market clearing service ensures that $G$ is strongly connected and
satisfies other assumptions of the swap protocol, if there are any.
It then informs each party of the steps of the swap protocol that this party should execute.
Importantly, we \emph{do not} assume that the parties trust the market clearing service.


\paragraph{Secrets and hashlocks.}
Our protocols use a cryptographic tool of one-way functions. We allow
each party $v$ to create a \emph{secret value} $s_v$,
and convert it into a \emph{hashlock value} $h_v = H(s_v)$, where $H()$ is a one-way
permutation.  The value of $s_v$ is secret, meaning that
no other party has the capability to compute $s_v$ from $h_v$. The hashlock values can be made public.

 
 \paragraph{Hash time-lock contracts (HTLCs).}
Asset transfers are realized vwith smart contracts, which in practice are simply appropriate pieces of code running on a blockchain.
For our purpose, the internal processing within smart contracts is not relevant; in our model these are
just ``black-box'' objects with a specified functionality. 
The contracts used in our model are called \emph{hash time-lock contracts}, or \emph{HTLCs}, for short,
and are defined as follows: Each contract is associated with
an arc of $G$.  For an arc $(u,v)$, its associated contract is used to transfer the
asset $(u,v)$ from $u$ to $v$. It is created by party $u$, with $u$
providing it with the asset, timeout value $\tau$, and a hashlock value $h$. Once this contract is created,
the possession of the asset is transferred from $u$ to the contract; in other words,
the asset is in this contract's escrow. The counter-party $v$ can access the contract to verify
whether its correctness; in particular, it can learn the hashlock value $h$.
There are two ways in which the asset can be released:
\begin{itemize}\setlength{\itemsep}{-0.03in}
\item
One, $v$ (the buyer) can claim it. To claim it successfully, $v$ must provide a
value $s$ such that $H(s) = h$ not later than at time $\tau$. When this happens, the smart contract 
transfers the asset to $v$, and it also gives the value of $s$ to $u$.
One can think of this as an exchange of the asset for the secret value $s$. 
\item
Two, the contract can expire. As soon as the current time exceeds $\tau$, and if the asset has not been
successfully claimed, the contract returns the asset to its seller $u$.
\end{itemize}
All contracts considered in our paper are HTLCs, so the terms ``contract'' and ``HTLC'' are from now on synonymous.
Further overloading notation and terminology, we will also refer to the contract on arc $(u,v)$
as ``contract $(u,v)$''. If this contract has hashlock $h_x$ of a party $x$ (where $x$ may be different from $u$ and $v$),
we will say that it is \emph{protected by hashlock $h_x$} or simply \emph{protected by party $x$}. 
Note that this does not mean that this contract was created by $x$.


\paragraph{Swap protocols.}
We assume that the time is discrete, consisting of unit-length time steps, starting at time $0$.
A swap protocol $\protP$ specifies for each party and each time step what actions this
party executes at this step.  These actions may involve internal computation or external actions, like communication.

In an execution of $\protP$ there is no guarantee that all parties actually follow $\protP$.
It may happen that some parties may deviate from $\protP$ for some reason,  
for example in an attempt to improve their outcomes, or they may just behave erratically.
When we refer to an \emph{honest} or \emph{conforming} party $u$, we mean that $u$ follows $\protP$,
except when it can infer from an interaction with some other party that not all parties follow $\protP$.
From that point on, $u$ may behave arbitrarily (but still rationally, so it would not do anything that
might worsen its ultimate outcome).

The execution of a protocol
results in some assets being transferred between parties. We assume that when
the process completes, even if some parties do not follow the protocol, an asset associated
with arc $(u,v)$ will end up either in the possession of $u$ or in the possession of $v$.

In an \emph{HTLC-based protocol}, all asset transfers are implemented with
HTLCs, and no other interaction between the parties is allowed. Each party is
allowed to create one secret/hashlock pair. These hashlock values are 
distributed via smart contracts. (These values can in fact be simply made public.)
A more formal definition can be found in Section~\ref{sec: characterization of digraphs that admit shl protocols}.


\paragraph{Outcomes.}
For each party $v$, \emph{$v$'s outcome} associated with an execution of a protocol $\protP$ is specified by
the sets of asets that are relinguished and acquired by $v$ in this execution.
Thus such an outcome is a pair $\outcome = \outcomepair{\outcomein{}}{\outcomeout{}}$, where
$\outcomein{} \subseteq \inArcs{v}$ and $\outcomeout{} \subseteq \outArcs{v}$. 
To reduce clutter, instead of arcs, in $\outcomepair{\outcomein{}}{\outcomeout{}}$ we
can list only the corresponding in-neighbors and out-neighbors of $v$; for example, instead of 
$\outcomepair{\braced{(x,v),(y,v)}}{\braced{v,z}}$ we will write $\outcomepair{x,y}{z}$.

An outcome $\outcome = \outcomepair{\outcomein{}}{\outcomeout{}}$ of some party $u$ is called \emph{acceptable}
if in this outcome $u$ retains all its own assets or it gains all incoming assets.
That is, either $\outcomein{} = \inArcs{v}$ or $\outcomeout{} = \emptyset$ (or both).
Following Herlihy~\cite{Herlihy18}, we define several types of outcomes for each party $v$:
\begin{description}\setlength{\itemsep}{-0.03in}
\item{$\Dealv{v} = \outcomepair{\inArcs{v}}{ \outArcs{v}}$} represents an outcome
	where all prearranged asset transfers involving $v$ are completed.
\item{$\NoDealv{v} = \outcomepair{\emptyset}{ \emptyset}$} represents an outcome
	where none of prearranged asset transfers involving $v$ is completed.
\item{$\Discountv{v} = \braced{ \outcomepair{\inArcs{v}}{\outcomeout{}} \suchthat \outcomeout{}\neq  \outArcs{v} }$}.
	That is, $\Discountv{v}$ is the set of outcomes in which all of $v$'s incoming
	asset transfers are completed, but not all outgoing transfers are.
\item{$\FreeRidev{v} = \braced{ \outcomepair{\outcomein{}}{\emptyset} \suchthat \outcomein{} \neq \emptyset}$}.
	That is, $\FreeRidev{v}$ is the set of outcomes in which  none of $v$'s outgoing
	asset transfers is completed, but some of its incoming transfers are.
\end{description}
We will skip the subscript $v$ in these notations whenever $v$ is understood from context.
These four types of outcomes are exactly all acceptable outcomes, that is $\outcome$ is acceptable if
and only if
\begin{equation*}
	\omega \;\in\; \braced{\NoDeal} \cup \braced{\Deal} \cup \FreeRide \cup \Discount
\end{equation*}
(This is in fact how~\cite{Herlihy18} defines the acceptable outcomes.)
All other outcomes are of type $\Underwater$ and are considered unacceptable.

For a set $C$ of parties, its set $\inArcs{C}$ of incoming arcs consists of
arcs $(u,v)$ with $u\notin C$ and $v\in C$. The set $\outArcs{C}$ of outgoing
arcs is defined analogously. With this,
the concept of outcomes and its properties, as defined above,
extend naturally to sets of parties (that we refer
occasionally as ``coalitions''). For example, an outcome of $C$ is
\emph{acceptable} if it either contains all incoming arcs of $C$ or
does not contain any outgoing arcs of $C$.


\paragraph{The preference relation.}
A \emph{preference relation} of a party $v$ is a partial order on the set of all outcomes for $v$ that satisfies the following three properties:

\begin{itemize}\setlength{\itemsep}{-0.05in}
\item [(\emph{p1})] If $\outcomein{1} \subseteq \outcomein{2}$ and $\outcomeout{1}\supseteq \outcomeout{2}$, 
				then $\outcome{_2}$ is preferred to $\outcome{_1}$. In other words, it is better to receive more assets and relinguish fewer assets.
\item [(\emph{p2})] If $\outcome \in \Underwater$ then $\NoDeal$ is preferable to $\outcome$. 
\item [(\emph{p3})] $\Deal$ is better than $\NoDeal$. Otherwise, $v$ would have no incentive to participate in the protocol. 
\end{itemize}

The preference relation of a party $v$ captures which outcomes are more desirable for $v$. Its intended to
capture rational behavior of parties, leading to the definition of Nash equilibrium property, given below.


\begin{figure}[t]
\begin{center}
\includegraphics[width = 4.5in]{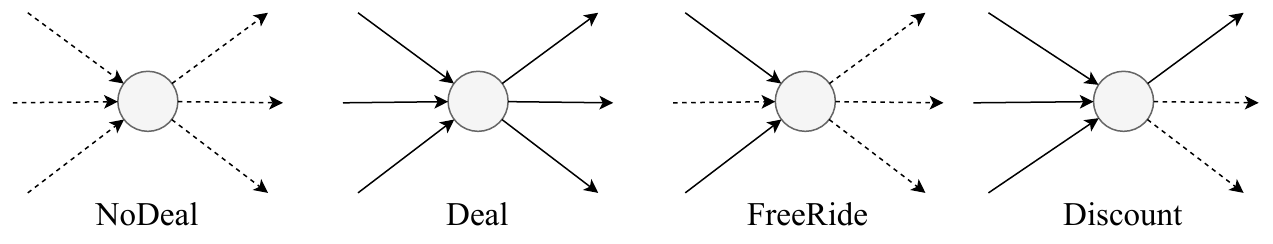}
\caption{Four types of acceptable outcomes. Solid arrows represent transferred assets and dotted arrows represent those that are not.}
\end{center}
\end{figure}


\begin{figure}[t]
\begin{center}
\includegraphics[width=5in]{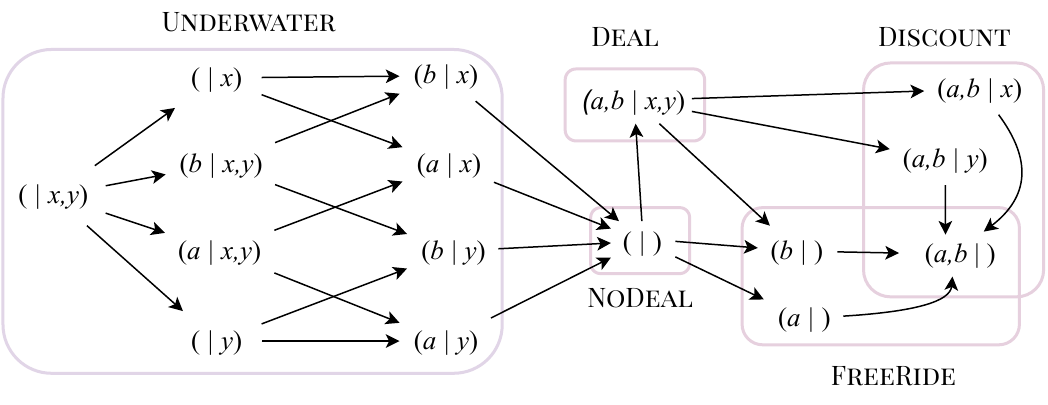}
\caption{An example of a preference relation of a node $v$ whose neighborhoods are $\inNeighbors{v} = \braced{a,b}$ and $\outNeighbors{v} = \braced{x,y}$.
Arrows represent preferences.}\label{poset}
\end{center}
\end{figure}


\paragraph{Protocol properties.}
Following~\cite{Herlihy18}, we define the following properties of a swap protocol $\protP$:
\begin{description}\setlength{\itemsep}{-0.03in}
	
	\item{\emph{Liveness}:} $\protP$ is called \emph{live} if all assets are transferred
		(that is, each party ends up in $\Deal$), providing that all parties follow $\protP$.
	
	\item{\emph{Safety}:} $\protP$ is called \emph{safe} if each honest party
		ends up in an acceptable outcome, independendently of the behavior of other parties.
		
	\item{\emph{Strong Nash Equalibrium}:}
		$\protP$ has the \emph{strong Nash equilibrium property} if
		no party has an incentive to deviate from the protocol. More precisely,
		for any set $C$ of parties (a coalition), if all parties outside $C$
		follow the protocol then the parties in $C$ cannot improve the
		outcome of $C$ by deviating from the protocol.

	\item{\emph{Atomicity}:} $\protP$ is called \emph{atomic} if its live, safe,
		and has the strong Nash equilibrium property.

\end{description}


The lemma below is a mild extension of the one given by Herlihy~\cite{Herlihy18}.
The difference is that in our definition of the preference relation we allow
some $\FreeRide$ outcomes to be preferable to $\Deal$, which was not the case in~\cite{Herlihy18}. 
The point of the lemma is that, in Herlihy's preference model, the
strong Nash equilibrium property comes for free, namely that each protocol 
has this property as long as it satisfies the liveness and safety properties.
The strong connectivity assumption is necessary for the safety property to hold, see~\cite{Herlihy18}. 

\begin{lemma}\label{lem: live+safe->nash}
Assume that digraph $G$ is strongly connected. If a protocol $\protP$ is live and safe then $\protP$ is 
atomic.
\end{lemma}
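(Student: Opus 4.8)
The plan is to notice that, since atomicity is defined as liveness together with safety and the strong Nash property, and $\protP$ is assumed live and safe, it suffices to establish the strong Nash property. So I would fix a coalition $C$ and assume every party in $\compl{C} = V\setminus C$ follows $\protP$; the goal is to show that $C$ cannot reach an outcome strictly preferred to the one it obtains by conforming. By liveness, if $C$ also conforms then every party, and hence $C$ as a whole, ends in $\Dealv{C}$, so what must be ruled out is any deviation that lands $C$ in an outcome strictly preferred to $\Dealv{C}$. Using properties (p1)--(p3), an outcome strictly preferred to $\Deal$ can only be of type $\Discount$ or $\FreeRide$ (an $\Underwater$ outcome is worse than $\NoDeal$, which is worse than $\Deal$); in either case $C$ fails to relinquish at least one of its boundary arcs, that is $\outcomeout{C}\subsetneq\outArcs{C}$.

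The engine of the argument is a single consequence of safety that I would isolate first: for any honest party $x$, if $x$ relinquishes even one outgoing asset, then $x$ received all of its incoming assets. Indeed, relinquishing an asset excludes the types $\NoDeal$ and $\FreeRide$, and the remaining acceptable types $\Deal$ and $\Discount$ both acquire all incoming arcs. Contrapositively, let $S\subseteq\compl{C}$ be the set of honest parties that miss at least one incoming asset; each such party relinquishes nothing. This makes $S$ forward-closed inside $\compl{C}$: if $x\in S$ and $(x,y)$ is an arc with $y$ honest, then $y$ does not receive $(x,y)$, so $y\in S$ as well.

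I would then derive a contradiction in each case using strong connectivity. Since $\outcomeout{C}\subsetneq\outArcs{C}$, there is a boundary arc $(u,w_0)$ out of $C$ that $C$ does not relinquish; its head $w_0$ is honest and misses this asset, so $w_0\in S$ and $S\neq\emptyset$. In the $\Discount$ case, as $S$ is forward-closed in $\compl{C}$ and $G$ is strongly connected with $\emptyset\neq S\subsetneq V$, some arc must leave $S$, and it can only enter $C$; its head in $C$ therefore does not receive an incoming asset, so $\outcomein{C}\neq\inArcs{C}$, contradicting the defining property of $\Discount$. In the $\FreeRide$ case $C$ relinquishes nothing, so every arc from $C$ into $\compl{C}$ has its head in $S$; setting $T=\compl{C}\setminus S$ (the honest parties that received all incoming), $T$ is nonempty because $C$ acquires some boundary asset whose honest tail must have relinquished it and hence lies in $T$. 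By forward-closure there is no arc from $S$ to $T$, and by the previous sentence no arc from $C$ to $T$, so $T$ is a nonempty proper subset of $V$ with no incoming arc from outside $T$, which is impossible in a strongly connected digraph. Either way the deviation cannot improve on $\Dealv{C}$, so $\protP$ has the strong Nash property and is atomic.

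The main obstacle is precisely the interaction with the adversarial coalition. A tempting shortcut --- that the honest coalition $\compl{C}$ must itself end in an acceptable outcome --- is \emph{false}, since $C$ can ``bridge'' two honest parties and spoil the joint outcome of $\compl{C}$ while simultaneously ruining its own; so one cannot simply lift individual safety to the coalition $\compl{C}$. The real content is that the deficient set $S$ is forward-closed among honest parties, which, combined with strong connectivity, pins down the adversary: any attempt by $C$ to keep one of its outgoing assets propagates a deficiency that either denies $C$ an incoming asset (killing $\Discount$) or isolates a source-like honest set $T$ with no incoming arcs (killing $\FreeRide$). Handling the $\FreeRide$ case is the subtle part, as it is exactly the behavior newly permitted by the relaxed preference model and cannot be excluded by the ``$C$ misses an incoming arc'' observation alone.
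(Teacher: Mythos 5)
Your proof is correct, and it rests on the same skeleton as the paper's: reduce atomicity to the strong Nash property, observe that any outcome of $C$ strictly preferred to $\Deal$ must lie in $\Discount$ or $\FreeRide$, and then combine the safety of honest parties with strong connectivity to get a contradiction in each case. Where you differ is in how the contradiction is extracted. The paper chases an explicit path: in the $\Discount$ case a path from the head of a withheld outgoing arc of $C$ back to its tail, stopping at the \emph{first} vertex of the path inside $C$; in the $\FreeRide$ case a path from the head of a transferred incoming arc back to its tail, starting the propagation at the \emph{last} vertex inside $C$; in both cases the property ``this honest party relinquishes nothing'' is pushed vertex by vertex along the honest stretch of the path. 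You instead package the same safety consequence (an honest party that relinquishes anything acquires everything) into the forward-closed set $S$ of deficient honest parties, and finish with cut arguments: in $\Discount$ an arc leaving $S$ can only enter $C$ and witnesses a missing incoming asset of $C$; in $\FreeRide$ the set $T=\compl{C}\setminus S$ is nonempty yet has no incoming arc from outside, which is impossible in a strongly connected digraph. The content is equivalent, but your set formulation has two advantages: it never needs to worry about how the chosen path weaves in and out of the coalition (which is exactly what the paper's first/last-vertex choices are handling), and it isolates the single consequence of safety actually used as a standalone claim, which makes the $\FreeRide$ case --- the one newly permitted by the relaxed preference model --- the cleaner of the two rather than the more delicate. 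The paper's path argument, in turn, is more elementary and self-contained, needing only the definition of strong connectivity rather than the ``forward-closed nonempty proper subset'' lemma.
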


\begin{proof}
Assume that $\protP$ is live and safe. To prove that $\protP$ is atomic, we need to show that it satisfies
the strong Nash equilibrium property, as defined above. 

Towards contradiction, suppose that some coalition $C$ has an outcome $\outcome$ preferable to $\Deal$, 
even though all other parties outside $C$ follow $\protP$. Since $\outcome$ is preferable to $\Deal$, it
must be either in $\Discount$ or in $\FreeRide$. We now consider these two cases, in each
reaching a contradiction. (See Figure~\ref{fig: nash lemma cases} for illustration.)

\medskip
\noindent
\mycase{1} $\outcome\in\Discount$. That is, all incoming assets of $C$ are transferred, but
some outgoing asset $(x,y)$ of $C$ is not. Since $G$ is strongly connected, there is a 
path $Q = y_{1}y_{2}. . . y_k$ in $G$  from  $y = y_1$ to $x= y_k$. 
Let $y_{j}$ be the first vertex on this path that is in $C$. By the choice of $y_1$, we have $j > 1$.
As $y_1$ follows the protocol, the safety condition guarantees that its outcome is acceptable.
Since $(x,y_1)$ is not transferred and $y_1$'s outcome is acceptable, all outgoing assets of
$y_1$ are also not transferred.  
In particular, asset $(y_1,y_2)$ is not transferred.
The same argument gives us that if $j > 2$ then all outgoing assets of $y_2$ are not transferred.
Repeating this for all vertices $y_1,y_2,...,y_{j-1}$, we will eventually obtain that
the asset $(y_{j-1},y_{j})$ is not transferred. But $(y_{j-1},y_{j})$ is an incoming asset of $C$,
so $C$'s outcome $\outcome$ cannot be in $\Discount$.

\begin{figure}[t]
\begin{center}
\includegraphics[width = 2.2in]{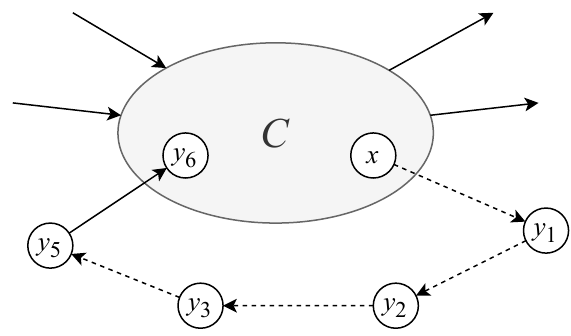}
\hspace{0.5in}
\includegraphics[width = 2.2in]{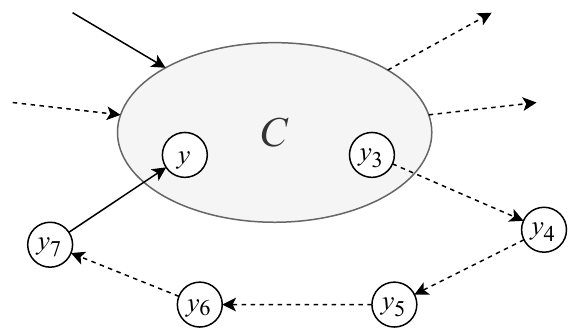}. 
\end{center}
\caption{Illustration of Cases~1 and~2 in the proof of Lemma~\ref{lem: live+safe->nash}. 
Solid arrows represent transferred assets and dotted arrows represent those that are not.
In the figure on the left, $j=6$. In the figure on the right, $k = 7$ and $j=3$.
}
\label{fig: nash lemma cases}
\end{figure}

\medskip
\noindent
\mycase{2} $\omega\in \FreeRide$. That is, at least one incoming asset of $C$ is transferred, but no outgoing asset of $C$ is transferred.
 Consider a transferred asset $(x,y)$ that is incoming into $C$. 
 Since $G$ is strongly connected, there is a path $Q = y_1 y_2\ldots y_k$ in $G$ from $y=y_1$ to $x=y_k$. 
 Let $y_j$ be the last vertex in $C$ on this path. Then $(y_j,y_{j+1})$, being an outgoing asset from $C$, is not transferred. 
 Since $y_{j+1}$ follows the protocol, the safety condition guarantees that its outcome is acceptable. 
 Since $(y_j, y_{j+1})$ is not transferred and $y_{j+1}$'s outcome is acceptable, all outgoing asset of $y_{j+1}$ are also not transferred. 
 In particular, $(y_{j+1},y_{j+2})$ is not transferred. The same argument gives us that if $j+2\le k$ then all outgoing assets of $y_{j+2}$ are not transferred.
  Repeating this for all remaining vertices on the path, we will eventually obtain that  asset $(y_{k},y) = (x,y)$ is not transferred, which is a contradiction.
So $C$'s outcome $\omega$ cannot be in $\FreeRide$.

\end{proof}


\section{An Atomic Protocol for Reuniclus Digraphs}
\label{sec: an atomic protocol for reuniclus digraphs}



In this section we give an atomic asset-swap protocol for reuniclus digraphs.
The concept of reuniclus digraphs was introduced informally in
Section~\ref{sec: introduction}, and the formal definition is given below
in Section~\ref{subsec: protocol for reuniclus digraphs}.

We present this protocol in two steps. First, in Section~\ref{subsec: protocol for bottleneck digraphs}, 
we describe an atomic protocol for bottleneck digraphs called Protocol~$\ProtBDP$ (short for \emph{bottleneck-digraph protocol}).
This protocol is essentially equivalent to a simplified version of Herlihy's protocol (see~\cite{Herlihy18}, Section 46). 
We include it here for the sake of completeness, as it  is an essential ingredient of 
our full protocol for reuniclus digraphs, called Protocol~$\ProtRDP$
(short for \emph{reuniclus-digraph protocol}), that is
presented in Section~\ref{subsec: protocol for reuniclus digraphs}.


\subsection{Protocol~$\ProtBDP$ for Bottleneck Digraphs}
\label{subsec: protocol for bottleneck digraphs}



A vertex $v$ in a digraph $G$ is called a \emph{bottleneck vertex} if it belongs to each cycle of $G$.
In other words, $v$ is a singleton feedback vertex set of $G$. If $G$ is strongly connected and
has a bottleneck vertex then we refer to $G$ as a \emph{bottleneck digraph}.

We now describe Protocol~$\ProtBDP$, an atomic asset-swap protocol for bottleneck digraphs.
Given a bottleneck digraph $G$, one bottleneck vertex of $G$ is designated as the \emph{leader}.
This leader, denoted $\leader$, creates its secret/hashlock pair $(s_\leader,h_\leader)$.
The other vertices are called \emph{followers}. Protocol~$\ProtBDP$ has two phases.
The first phase, initiated by $\leader$, creates all contracts. Each follower waits for all the incoming
contracts to be created, and then creates the outgoing contracts. The timeout values for 
all incoming contracts are strictly larger than the timeout values for all outgoing contracts.
In the second phase the assets are claimed, starting with $\leader$ claiming its incoming assets.
Now the process proceeds backwards. For each follower $v$, when any of its outgoing assets is claimed,
$v$ learns the secret value $s_\leader$, and it can now claim its incoming assets.

The detailed description of this protocol is given in Figure~\ref{fig: protocol BDP}. In the protocol, 
$\maxDfromLeader{v}$ denotes the \emph{maximum distance from $\leader$ to $v$}, defined as
the maximum length of a simple path from $\leader$ to $v$. In particular, $\maxDfromLeader{\leader} = 0$. 
For $y\neq\leader$, the value of $\maxDfromLeader{y}$ can be computed recursively using the formula
\begin{equation}
	\maxDfromLeader{y} \;=\; \max_{z \in\inNeighbors{y}} \maxDfromLeader{z} + 1
	\label{eqn: distance from leader}
\end{equation}
This is in essence the standard recurrence for computing the longest path in acyclic digraphs.
(One can think of $G$ as a DAG by splitting $\leader$ into two vertices, 
one with the outgoing arcs of $\leader$ and the other with the incoming arcs of $\leader$.)
The values $\maxDfromLeader{y}$ are used in contract creation times.

By $\ContractCreationTime$ we denote the maximum length of a simple cycle in $G$. This cycle is
just a path from $\leader$ to $\leader$ that does not repeat any vertices except for $\leader$,
so  $\ContractCreationTime = \max_{z \in\inNeighbors{\leader}} \maxDfromLeader{z} + 1$,
analogously to~\eqref{eqn: distance from leader}. 

In the timeout values, the notation $\maxDtoLeader{v}$ is the maximum distance from $v$ to $\leader$. 
These values satisfy a recurrence symmetric to~\eqref{eqn: distance from leader}, namely
$\maxDtoLeader{\leader} = 0$ and for $y\neq \leader$
\begin{equation}
	\maxDtoLeader{y} \;=\; \max_{z \in\outNeighbors{y}} \maxDtoLeader{z} + 1
	\label{eqn: distance to leader}
\end{equation}
Naturally, we then also have $\max_{z \in\outNeighbors{\leader}} \maxDtoLeader{z} + 1 = \ContractCreationTime$.
Note that, for each party $v$, the timeouts of all incoming contracts $(u,v)$ are equal 
$\ContractCreationTime + \maxDtoLeader{v}$, exactly the time when $v$ is scheduled to claim them.
And, by equation~\eqref{eqn: distance to leader}, if $v\neq\leader$ then
$\ContractCreationTime + \maxDtoLeader{v}$ is larger
than the timeout $\ContractCreationTime + \maxDtoLeader{w}$ of each outgoing contract $(v,w)$.

Figure~\ref{fig: protocol BDP example} shows an example of a bottleneck digraph and its timeout values
in Protocol~$\ProtBDP$.


\begin{figure}[t]
\noindent
\framebox{
\begin{minipage}[t]{2.75in}
\noindent
\textbf{Protocol} BDP for leader $\leader$:
\vspace{-.075in}
\begin{description}[leftmargin=*,itemsep=-0.04in]
\item{\textit{{\onestep} At time $0$:}} Create a secret $s_\leader$ and compute 
	$h_\leader = H(s_\leader)$. For each arc $(\leader,v)$,
	create the contract with hashlock $h_\leader$ and timeout $\tau_{\leader v} = \ContractCreationTime + \maxDtoLeader{v}$.
\item{\textit{{\onestep} At time $\ContractCreationTime$:}} Check if all the incoming contracts are
	 	created, correct, and have the same hashlock value $h_\leader$. (If not, abort.)
		Claim all incoming assets using secret $s_\leader$.
\end{description}
\end{minipage}
}
\framebox{
\begin{minipage}[t]{3.25in}
\noindent
\textbf{Protocol} BDP for a follower $u$:
\vspace{-.075in}
\begin{description}[leftmargin=*,itemsep=-0.04in]
	\item{\textit{{\onestep} At time $\maxDfromLeader{u}$:}}	Check if all the incoming contracts are created,
		 correct, and have the same hashlock value, say $h$. (If not, abort.)
		 For each arc $(u,v)$,
		 create contract with hashlock $h$ and timeout $\tau_{uv} = \ContractCreationTime + \maxDtoLeader{v}$.		 
	\item{\textit{{\onestep} At time $\ContractCreationTime + \maxDtoLeader{u}$:}} 
			Check if any of the outgoing assets was claimed. (If not, abort.)
			Let $s$ be the secret obtained from the contract for some claimed outgoing assets.
			Use $s$ to claim all incoming assets.
\end{description}
\end{minipage}
}
\caption{Protocol~$\ProtBDP$, with the protocol of the leader on the left, and the protocol of the followers on the right.
Each bullet-point step takes one time unit.
To check correctness of an incoming contract, the buyer verifies if the seller created it according to the protocol;
in particular, whether the contract contains the desired asset and whether the timeout values are correct.
}\label{fig: protocol BDP}
\end{figure}


\begin{figure}
\begin{center}
\includegraphics[width = 2.25in]{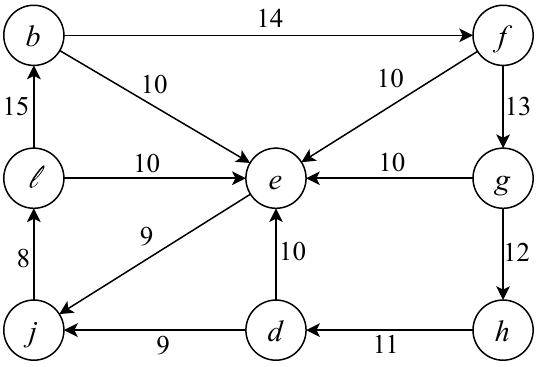}
\end{center}
\caption{An example of a bottlebeck digraph $G$ and the timeout values for Protocol $\ProtBDP$ for $G$.
The longest cycle in $G$ is $\leader,b,f,g,h,d,e,j,\leader$, so $\ContractCreationTime = 8$.}\label{fig: protocol BDP example}
\end{figure}


\begin{theorem}\label{thm: bottleneck->elementary}
If $G$ is a bottleneck digraph, then Protocol $\ProtBDP$ is an atomic swap protocol for $G$.
\end{theorem}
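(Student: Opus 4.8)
The plan is to reduce to Lemma~\ref{lem: live+safe->nash}. A bottleneck digraph is by definition strongly connected, so it suffices to show that Protocol~$\ProtBDP$ is \emph{live} and \emph{safe}; atomicity then comes for free. For liveness I would assume every party conforms and check that the schedule is internally consistent using the two recurrences. In the creation phase a follower $u$ acts at time $\maxDfromLeader{u}$ while each in-neighbor $z$ creates its contracts at time $\maxDfromLeader{z}$ (or at time $0$ if $z=\leader$); since~\eqref{eqn: distance from leader} gives $\maxDfromLeader{u}>\maxDfromLeader{z}$ for every $z\in\inNeighbors{u}$, all of $u$'s incoming contracts already exist and are correct when $u$ inspects them, so no conforming party aborts and every contract gets created. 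In the claiming phase I would run the symmetric argument with~\eqref{eqn: distance to leader}: the leader reveals $s_\leader$ at time $\ContractCreationTime$, and a follower $v$ claims at time $\ContractCreationTime+\maxDtoLeader{v}$; each out-neighbor $w$ claims $(v,w)$ at time $\ContractCreationTime+\maxDtoLeader{w}<\ContractCreationTime+\maxDtoLeader{v}$, so the secret reaches $v$ strictly before its claim step, and that claim lands exactly at the timeout $\ContractCreationTime+\maxDtoLeader{v}$ of every incoming contract, hence in time. Thus all parties reach $\Deal$.

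For safety I would fix an honest party $v$ and show its outcome is acceptable, i.e.\ $\outcomein{}=\inArcs{v}$ or $\outcomeout{}=\emptyset$, no matter how the remaining parties behave. The governing invariant is that $v$ relinquishes an outgoing asset only once it can learn a secret $s$ with $H(s)=h$, where $h$ is the common hashlock it verified on all its incoming contracts, and this same $s$ unlocks every one of those incoming contracts. Concretely, if $v$ aborts during creation (incoming contracts missing, malformed, or with inconsistent hashlocks) it builds no outgoing contract, and since only $v$ can create contracts on its own out-arcs we get $\outcomeout{}=\emptyset$. If $v$ does not abort it stamps all outgoing contracts with the verified hashlock $h$; should no out-neighbor ever claim, those assets expire back to $v$ and again $\outcomeout{}=\emptyset$. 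In the remaining case some out-arc $(v,w)$ is claimed, which the contract forces to happen no later than its timeout $\ContractCreationTime+\maxDtoLeader{w}$; at that instant $v$ obtains $s$, and because $\maxDtoLeader{w}<\maxDtoLeader{v}$ the incoming timeouts $\ContractCreationTime+\maxDtoLeader{v}$ are all strictly later, so $v$ can (and rationally will) claim every incoming asset, yielding $\outcomein{}=\inArcs{v}$. The leader is handled the same way, noting that $s_\leader$ stays hidden unless $\leader$ itself claims: aborting leaves $\outcomeout{\leader}=\emptyset$, while claiming gives $\outcomein{\leader}=\inArcs{\leader}$.

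The main obstacle, and the part deserving the most care, is this safety analysis against arbitrary, possibly colluding or irrational counterparties. I must argue that the only way $v$ can lose an outgoing asset is through a valid claim that simultaneously hands $v$ a usable preimage of $h$, and that the strict timeout separation $\ContractCreationTime+\maxDtoLeader{w}<\ContractCreationTime+\maxDtoLeader{v}$ coming from~\eqref{eqn: distance to leader} always leaves $v$ a window to rescue its incoming assets. This hinges on the consistency check performed at creation time, namely that all of $v$'s incoming contracts carry the \emph{same} hashlock $h$ that $v$ reuses on its out-arcs, so that revealing the preimage of $h$ unlocks everything at once; liveness, by contrast, is a routine scheduling verification. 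Once both properties are established, Lemma~\ref{lem: live+safe->nash} delivers atomicity.
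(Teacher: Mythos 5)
Your proposal is correct and follows essentially the same route as the paper's own proof: reduce to liveness plus safety via Lemma~\ref{lem: live+safe->nash}, verify liveness by the scheduling recurrences~\eqref{eqn: distance from leader} and~\eqref{eqn: distance to leader}, and prove safety by the abort/no-claim/claim case analysis, keyed to the fact that $v$ reuses the verified common hashlock $h$ on its outgoing contracts and that the timeout separation $\ContractCreationTime+\maxDtoLeader{w}<\ContractCreationTime+\maxDtoLeader{v}$ leaves $v$ time to claim its incoming assets. Your treatment of the leader (secret stays hidden unless $\leader$ itself claims) likewise matches the paper's Case~1.
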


\begin{proof}
According to Lemma~\ref{lem: live+safe->nash}, it is sufficient to prove only the liveness and safety properties. 

\emph{Liveness.}
For the liveness property, we need to prove that if all parties follow $\ProtBDP$ then all assets will be transferred, and thus all parties end up in $\Deal$.
Assume all parties follow the protocol. In the first phase, according to  equation~\eqref{eqn: distance from leader},
each party has one time unit after all its incoming contracts are placed to create its outgoing contracts.
Thus all contracts will be successfully created and the last contract will be created by the last
in-neighbor of $\leader$, in step $\max_{z \in\outNeighbors{\leader}} \maxDtoLeader{z} = \ContractCreationTime-1$,
so before the second phase starts. 
In the second phase each asset $(u,v)$ will be claimed at time $\ContractCreationTime + \maxDtoLeader{v}$, and at that time $u$
will receive the secret $s_\leader$. By equation~\eqref{eqn: distance to leader}, $\ContractCreationTime + \maxDtoLeader{u}$
is larger than $\ContractCreationTime + \maxDtoLeader{v}$.
So $u$ will be able to sucessfully claim all incoming assets at time $\ContractCreationTime + \maxDtoLeader{u}$,
which is also the timeout value for these assets. This shows that Protocol~$\ProtBDP$ satisfies the liveness property.


\emph{Safety.}
Consider some party $v$, and assume that $v$ follows the protocol. We need to show that $v$ will end up in an acceptable outcome,
even if other parties deviate from the protocol.
In other words, we need to show that either all incoming assets of $v$ are transferred or all outgoing assets of $v$ are not transferred.
We consider two cases, when $v$ is the leader and when $v$ is a follower.

\noindent
\mycase{1} $v=\leader$. 
If some incoming contract of $\leader$ is not created before time $\ContractCreationTime$, $\leader$ will abort the protocol\footnote{%
It may be tempting, if $\leader$ has any incoming contract created, to claim its asset, even if some other incoming contract is not created.
This, however, would make the protocol unsafe, because then $\leader$ could lose some of its outgoing assets.}.
Since $\leader$ does not reveal its secret, none of $\leader$'s outgoing assets can be claimed. So all outgoing contracts
will expire and their assets will be returned to $\leader$. In this case, $\leader$’s outcome is $\NoDeal$, so it is acceptable.

Next, suppose that all incoming contracts of $\leader$ are created before time $\ContractCreationTime$.
According to the protocol, each incoming contract $(u,\leader)$ of $\leader$ has timeout 
$\tau_{u \leader} = \ContractCreationTime +\maxDtoLeader{\leader} = \ContractCreationTime$. 
Thus $\leader$ can claim all incoming assets at time $\ContractCreationTime$,
so its outcome will be $\Deal$ or $\Discount$, both acceptable. 

\noindent
\mycase{2} $v\neq \leader$. 
If some incoming contract of $v$ is not created before time $\maxDfromLeader{v}$, $v$ likewise will abort the protocol.
Then none of $v$'s outgoing assets will be claimed and 
the incoming contracts will expire, so the outcome will be $\NoDeal$, which is acceptable. 

Next suppose that at time $\maxDfromLeader{v}$ all incoming contracts of $v$ are created, correct, and all have the same
hashlock value $h$. Then $v$ will create all outgoing contracts with the same hashlock $h$.
(We remark here that, as other parties may not follow the protocol, it is not necessarily true that $h = h_\leader$.)
At time $\ContractCreationTime + \maxDtoLeader{v}$,
if none of the outgoing assets of $v$ are claimed then the outgoing assets will be
returned to $v$, so $v$'s outcome will be $\NoDeal$ or $\FreeRide$.
On the other hand, if any asset $(v,w)$ is claimed, it must be claimed no later than at time
$\tau_{vw} = \ContractCreationTime + \maxDtoLeader{w}$, and at this time $v$ will obtain a secret value $s$.
Since $v$ used the same hashlock $h$ on the outgoing contracts as the one on the incoming
contracts, it must be that $H(s) = h$, for otherwise $w$ would not be able to successfully claim $(v,w)$.
So $s$ will also work correctly for any incoming contract.
Since $\tau_{vw} < \ContractCreationTime + \maxDtoLeader{v}$, and  $\ContractCreationTime + \maxDtoLeader{v}$
is the timeout of all $v$'s incoming contracts, $v$ will successfully claim all 
incoming assets at time $\ContractCreationTime + \maxDtoLeader{v}$.
Thus $v$’s outcome would be either $\Deal$ or $\Discount$, both of which are acceptable.
\end{proof}

We remark that some optimizations can reduce the number of steps needed in Protocol~$\ProtBDP$.
For example, it is not necessary for a follower $u$ to wait until step $\ContractCreationTime + \maxDtoLeader{u}$
to claim the incoming assets. Instead, $u$ can claim all its incoming assets as soon as any of its
outgoing assets is claimed.


\subsection{Protocol~$\ProtBDP$ for Reuniclus Digraphs}
\label{subsec: protocol for reuniclus digraphs}



\myparagraph{Reuniclus digraphs.}
Let $G$ be a strongly connected digraph. We call $G$ a \emph{reuniclus digraph} if there are 
vertices $b_1,b_2,...,b_p\in G$, induced subgraphs $G_1,G_2,...,G_p$ of $G$,
and a rooted tree $\controlgraph$ whose nodes are $b_1,b_2,...,b_{p}$, with the following properties:
\begin{description}				
	\item{(rg1)} Each digraph $G_j$ is a bottleneck subgraph, with $b_j$ being its bottleneck vertex.
				We call $G_j$ a \emph{bottleneck component} of $G$.
	\item{(rg2)} If $i\neq j$, then 
	\begin{equation*}
				G_i\cap G_j  \;=\; \begin{cases}
											\braced{b_j} 	& \textrm{if $b_i$ is the parent of $b_j$ in $\controlgraph$}
											\\
					 						\emptyset 		& \textrm{otherwise}
									\end{cases}
	\end{equation*}
\end{description}
Part~(rg2) says that $G_i$ and $G_j$ are either disjoint or only share one vertex, which is the bottleneck $b_j$ of $G_j$.
We call $G_j$ the \emph{home component} of $b_j$.
We refer to $\controlgraph$ as the \emph{control tree} of $G$. (See Figure~\ref{fig: reuniclusgraph1} for an example.)
We extend the tree terminology to relations between bottleneck components,
or between bottleneck vertices and components, in a natural fashion. For example,
if $b_i$ is the parent of $b_j$ in $\controlgraph$ then
we refer to $G_i$ as the parent component of $b_j$, to $G_j$ as the child component of $b_i$.
The same convention applies to the ancestor and descendant relations.

Intuitively, a reuniclus graph $G$ can be divided into bottleneck components. Among these, most of are pairwise disjoint.
Overlaps are allowed only between two components if one is the parent of the other in the control tree $\controlgraph$,
in which case the overlap is just a single vertex that is the bottleneck of the child component.


\begin{figure}[ht]
\begin{center}
\includegraphics[valign=m,width = 4.5in]{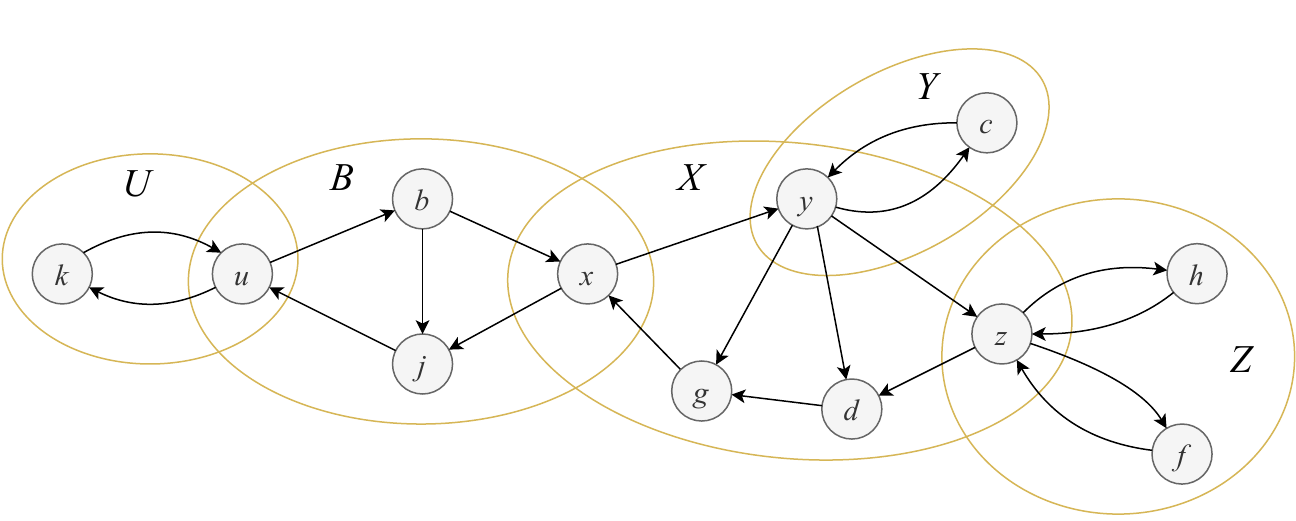}
\hspace{0.5in}
\includegraphics[valign=m,width = 1.2in]{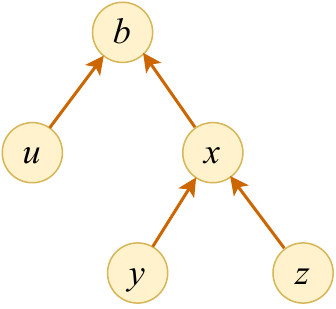}
\end{center}
\caption{An example of a reuniclus graph (left) and its control tree (right). 
The bottleneck components (circled) are $B$, $U$, $X$, $Y$ and $Z$. 
Their designated bottleneck vertices are $b$, $u$, $x$, $y$ and $z$.
Note that $Z$ consists of two biconnected components.
}\label{fig: reuniclusgraph1}
\end{figure}


From the definition, we have that the set of all bottleneck vertices in $G$ forms a feedback vertex set of $G$.
These bottleneck vertices are articulation vertices of $G$. The bottleneck components are not necessarily
biconnected; each bottleneck component may consist of several biconnected components that share the same
bottleneck vertex.


\paragraph{Protocol~$\ProtRDP$.}
Protocol~$\ProtRDP$ can be thought of as a hierachical extension of Protocol~$\ProtBDP$.
Each bottleneck vertex $b_j$ is called the leader of $G_j$. It creates its own secret/hashlock pair $(s_j,h_j)$, and
its hashlock $h_j$ is used to safely transfer assets within $G_j$, while the transfer of assets in
the descendants components of $b_j$ is ``delegated'' to the children of $b_j$ in $\controlgraph$.
We assume that the root component of $\controlgraph$ is $G_1$, and its
bottleneck $b_1$ is called the \emph{main leader}, that will be also denoted by $\leader$.
All non-leader vertices are called \emph{followers}.

Protocol~$\ProtRDP$  has two parts. In the first part
all contracts are created and in the second part the parties claim their incoming assets. 
In the first phase, at time $0$ all leaders create the outgoing contracts within their home bottleneck components.
Then the contracts are propagated within the bottleneck components, to some degree independently;
except that each leader $b_j$ creates its outgoing contracts in its parent component $G_i$ only
after all its incoming contracts, \emph{both in $G_i$ and $G_j$}, are created.
This in fact ensures that at that time all contracts in its descendant components will be already created.

In the asset claiming phase, the main leader $\leader$ is the first to claim the incoming contracts.
The behavior of followers is the same as in Protocol~$\ProtBDP$: they claim the incoming assets
one step after all their outgoing assets were claimed. 
The behavior of all non-main leaders is more subtle. 
Each such leader $b_j$ waits until all its outgoing assets \emph{in the parent component} 
are claimed, and then it claims all of its incoming assets. (So $b_j$ does not wait for the
outgoing assets in its home component to be claimed.)

The full protocol for non-main leaders $b_j$ is given in Figure~\ref{fig: protocl SHCP}. 
Figure~\ref{fig: reuniclusgraph1 timeouts} shows timeout values for the reuniclus graph in Figure~\ref{fig: reuniclusgraph1}.
In what follows we explain some notations used in the protocol.

As before, we use notation $\maxDtoLeader{y}$ for the maximum distance from $y$ to $\leader$ in $G$
(see the formula~\eqref{eqn: distance from leader} in Section~\ref{subsec: protocol for bottleneck digraphs}).
The formula~\eqref{eqn: distance to leader} remains valid as is.
We also need the concept analogous to the maximum distance from a leader, but this one is a little
more subtle than for bottleneck graphs, because we now need to consider paths whose initial bottleneck
vertex can be repeated once later on the path. Formally,
if $v\in G_i$, then $\maxDfromSubleader{v}$ denotes the maximum length of a path with the following properties:
\begin{description}\renewcommand{\itemsep}{-0.02in}
	\item{(i)} it starts at some leader $b_j$ that is a descendant of $b_i$ (possibly $b_j = b_i$), 
	\item{(ii)} it ends in $v$, and 
	\item{(iii)} it	does not repeat any vertices, with one possible exception: it can only revisit $b_j$,
		and if it does, it either ends or leaves $G_j$ (and continues in the parent component of $b_j$).
\end{description}
(This can be interpreted as a maximum path length in a DAG obtained by splitting each leader into
two vertices, one with the outgoing arcs into its home component and the other with all other arcs.)
For example, in the example in Figure~\ref{fig: reuniclusgraph1}, one allowed path 
for $v=u$ is $x-y-z-d-x-j-u$.

These values can be computed using auxiliary values $\maxDfromSubleader{uv}$ defined for each edge $(u,v)$.
Call an edge $(u,v)$ a \emph{bottleneck edge} if $u$ is a bottleneck vertex, say $u = b_j$, and $v\in G_j$.
That is, bottleneck edges are the edges from bottleneck vertices that go into their home components.
First, for each bottleneck  edge $(b_j,v)$ let $\maxDfromSubleader{b_jv} = 0$. Then, for each
vertex $u$ and each non-bottleneck edge $(u,v)$ let
\begin{equation}
			\maxDfromSubleader{u} \;=\; \maxDfromSubleader{uv} \;=\; \max_{(x,u)} \maxDfromSubleader{xu} + 1
			\label{eqn: max d from subleader}
\end{equation}
where the maximum is over all edges $(x,u)$ entering $u$. By $\RContractCreationTime$ we denote
the value of $\maxDfromSubleader{\leader}$.

The values $\maxDfromSubleader{z}$ determine contract creation times.
As shown in Figure~\ref{fig: protocl SHCP}, each leader $b_j$ will create its contracts in its home
component at time $0$. Each other contract $(u,v)$ will be created at time $\maxDfromSubleader{u}$.
The last contract will be created by some in-neighbor of $\leader$ at time step  $\RContractCreationTime-1$.
Then $\leader$ will initiate the contract claiming phase at time $\RContractCreationTime$.
Analogous to Protocol~$\ProtBDP$, each party $u$ will claim 
its incoming contracts at time $\RContractCreationTime + \maxDtoLeader{u}$, which
is its timeout value.


\begin{figure}[t]
\begin{center}
\noindent
\framebox{
\begin{minipage}[t]{5.25in}
\noindent
\textbf{Protocol} $\ProtBDP$ for a leader $b_j \in G_i\cap G_j$: 
\begin{description}[leftmargin=*,itemsep=-0.04in]
\item{\textit{{\onestep} At time $0$:}} Generate secret $s_j$ and compute $h_j = H(s_j)$. 
		For each arc $(b_j,v)$ in $G_j$, 
		create contract  with hashlock $h_j$ and timeout $\tau_{b_jv} = \RContractCreationTime + \maxDtoLeader{v}$.
\item{\textit{{\onestep} At time $\maxDfromSubleader{u}$:}}
Check if all incoming contracts are in place, if they are correct, if
all the hashlocks in the incoming contracts in 
$G_i$ have the same value $h$, and if all the hashlocks in the incoming contracts in $G_j$ are equal $h_j$. (If not, abort.)
For each arc $(b_j,v)$ in $G_i$
create its contract with hashlock $h$ and timeout $\tau_{b_iv} = \RContractCreationTime + \maxDtoLeader{v}$.
\item{\textit{{\onestep} At time $\RContractCreationTime + \maxDtoLeader{u}$:}} 
			Check if any of the outgoing assets in $G_i$ was claimed. (If not, abort.)
			Let $s$ be the secret obtained from the contract for some claimed outgoing assets in $G_i$.
			Claim all its incoming assets,
			using secret $s$ in $G_i$ and using secret $s_j$ in $G_j$.
\end{description}
\end{minipage}
}
\hspace{0.02in}
\caption{Protocol~$\ProtRDP$ for a  sub-leader $b_j$, namely the bottleneck vertex of $G_j$ that also
	 		belongs to its parent graph $G_i$. 
			Recall that $\maxDfromSubleader{u}$ is the maximum distance from some leader to $u$ along a path
			that satisfies conditions~(i)-(iii), and that $\RContractCreationTime = \maxDfromSubleader{\leader}$.
			$\maxDtoLeader{v}$ is the maximum length of a simple path from $v$ to $\leader$, 
			as defined in Section~\ref{subsec: protocol for bottleneck digraphs}.
			As explained in the text, the protocols for the main leader and the pure 
			followers are the same as in Protocol~$\ProtBDP$.}\label{fig: protocl SHCP}
\end{center}
\end{figure}



\begin{figure}
\begin{center}
\includegraphics[width = 5.25in]{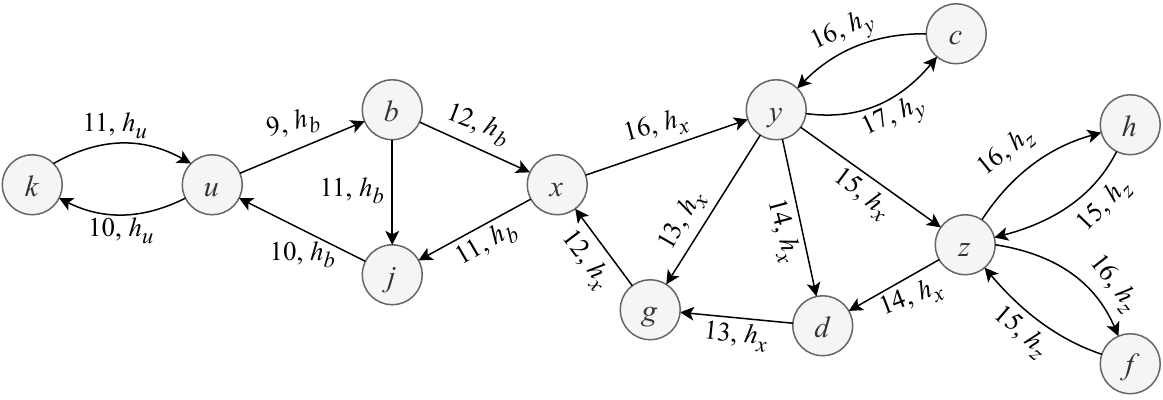}
\end{center}
\caption{Timeout values and hashlocks for Protocol~$\ProtRDP$ for the graph in Figure~\ref{fig: reuniclusgraph1}.
	The main leader is $\leader = b$. 
	We have $\RContractCreationTime = 9$ (this is the length of path $y-c-y-z-d-g-x-j-u-b$).
	}
\label{fig: reuniclusgraph1 timeouts}
\end{figure}


\begin{theorem}
If $G$ is a reuniclus digraph, then Protocol~$\ProtRDP$ is an atomic swap protocol for $G$.
\end{theorem}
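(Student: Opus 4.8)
The plan is to invoke Lemma~\ref{lem: live+safe->nash}: since a reuniclus digraph is strongly connected, it suffices to establish \emph{liveness} and \emph{safety} of Protocol~$\ProtRDP$, after which atomicity follows for free. Throughout I would use the fact that the bottleneck components partition the arcs of $G$, that every non-bottleneck vertex lies in a unique component $G_j$, and that each non-root bottleneck vertex $b_j$ lies in exactly two components, its home $G_j$ and its parent $G_i$, with its incident arcs split between the two. The proof reuses the structure of Theorem~\ref{thm: bottleneck->elementary}, applied component by component, with the hierarchy handled by induction on the control tree $\controlgraph$.

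For liveness I would first argue that the creation schedule is consistent. Reading off the recurrence~\eqref{eqn: max d from subleader}, the contract on an arc $(u,v)$ is created at time $\maxDfromSubleader{uv}$, and for every arc $(x,u)$ entering $u$ one has $\maxDfromSubleader{xu} < \maxDfromSubleader{u} = \maxDfromSubleader{uv}$; hence all incoming contracts of any vertex are in place strictly before it must create an outgoing one. The recurrence is well-founded because bottleneck arcs form the base case (created at time $0$) and every simple cycle of $G$ is confined to a single component $G_j$ and must leave $b_j$ along a bottleneck arc, so the dependency relation is acyclic (this is exactly the DAG obtained by splitting each leader). Consequently every contract is created by time $\RContractCreationTime-1$. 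For the claiming phase I would induct on the depth of a component in $\controlgraph$. The root $G_1$ behaves exactly as in Protocol~$\ProtBDP$: $\leader$ claims at time $\RContractCreationTime$, and revealing $s_1$ drives the backward cascade so that each $u\in G_1$ claims at its timeout $\RContractCreationTime+\maxDtoLeader{u}$. For a child $G_j$ of $G_i$, the inductive hypothesis gives that $b_j$'s outgoing asset in $G_i$ is claimed by time $\RContractCreationTime+\maxDtoLeader{b_j}-1$ (using $\maxDtoLeader{b_j} > \maxDtoLeader{w}$ for out-neighbours $w$ of $b_j$ in $G_i$); thus $b_j$ learns $s_i$, and at time $\RContractCreationTime+\maxDtoLeader{b_j}$ it claims all its incoming assets, using $s_i$ in $G_i$ and revealing its own secret $s_j$ in $G_j$. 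Revealing $s_j$ starts the same $\ProtBDP$-style cascade inside $G_j$, and since $\maxDtoLeader{u}=\maxDtoLeader{b_j}+d$ for $u\in G_j$ at distance $d$ from $b_j$, every asset of $G_j$ is claimed exactly at its timeout.

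For safety I would fix an honest party $v$ and show its outcome is acceptable, splitting into three cases. If $v=\leader$ or $v$ is a pure follower, the argument is verbatim that of the two cases of Theorem~\ref{thm: bottleneck->elementary}, using that $v$'s arcs lie in a single component and that $v$ reuses on its outgoing contracts the common hashlock it verified on its incoming contracts. The crucial new case is $v=b_j$, a non-root leader, where I would use the dichotomy governed by $b_j$'s parent-component outgoing assets. If some outgoing asset $(b_j,w)$ in $G_i$ is claimed, then $b_j$ must have created it, hence passed its creation-time check, so all its incoming contracts (in both $G_i$ and $G_j$) are present, correct, and carry the verified timeouts; the claim of $(b_j,w)$ delivers a secret $s$ with $H(s)=h$, the common hashlock $b_j$ placed on its incoming and outgoing contracts in $G_i$, so $s$ unlocks every incoming contract in $G_i$ while $s_j$ unlocks every incoming contract in $G_j$, and since this happens strictly before the common incoming timeout $\RContractCreationTime+\maxDtoLeader{b_j}$, party $b_j$ collects all incoming assets, landing in $\Deal$ or $\Discount$. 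If instead no outgoing asset of $b_j$ in $G_i$ is claimed, then $b_j$ aborts the claiming step and in particular never claims an incoming asset in $G_j$, so it never broadcasts $s_j$; by the secrecy of $s_j$, every arc of $G_j$ is protected by $h_j$ and no buyer can produce $s_j$, so the home outgoing assets $(b_j,\cdot)$ cannot be claimed and expire back to $b_j$, while the parent outgoing assets are unclaimed by hypothesis. Thus no outgoing asset of $b_j$ is transferred and the outcome is $\NoDeal$ or $\FreeRide$. Both branches are acceptable.

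The main obstacle is exactly this leader case, and within it the coupling between $s_j$ and $b_j$'s incoming claims: one must show that $b_j$'s home outgoing assets can be taken only in the branch where $b_j$ also secures all of its incoming assets. This rests on two points I would state carefully — that $s_j$ is injected into the system only by $b_j$'s own claim, since no other party can derive it from $h_j$, and that $b_j$ claims its incoming assets in an all-or-nothing fashion conditioned solely on its parent component. Verifying that the timeout inequalities $\RContractCreationTime+\maxDtoLeader{w} < \RContractCreationTime+\maxDtoLeader{b_j}$ hold across the $G_i$–$G_j$ boundary, so that the secret always arrives before the relevant expiry, is the routine but essential bookkeeping underpinning the argument.
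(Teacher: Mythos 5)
Your proposal is correct and follows essentially the same route as the paper: it invokes Lemma~\ref{lem: live+safe->nash} to reduce atomicity to liveness plus safety, proves liveness from the creation/claiming schedules, reuses the Protocol~$\ProtBDP$ safety argument for the main leader and pure followers, and handles a non-main leader $b_j$ via the same two key facts (a claim of a parent-component outgoing contract delivers a secret matching the verified incoming hashlock in time, and $s_j$ stays secret until $b_j$ claims all its incoming assets). The only cosmetic difference is that you split on whether any parent-component outgoing asset of $b_j$ is claimed, whereas the paper splits on whether a claimed outgoing asset lies in $G_i$ or $G_j$ --- logically the same argument, with your branch~2 spelling out slightly more carefully why the home-component contracts created at time $0$ must expire.
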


\begin{proof}
According to Lemma~\ref{lem: live+safe->nash}, it is sufficient to prove only the liveness and safety properties. 


\emph{Liveness.}
The liveness property is quite straightforward. Each party $u\neq\ell$ has exactly one time unit, after its last incoming contract
is created, to create its outgoing contracts, according to~(\ref{eqn: max d from subleader}).
This will complete the contract creation at time $\RContractCreationTime-1$.
Thus at time $\RContractCreationTime$ leader $\leader$ can claim its incoming assets.
For any other party $u$, each incoming asset $(x,u)$ of $u$ has timeout $\tau_{xu} = \RContractCreationTime + \maxDtoLeader{u}$.
By formula~\eqref{eqn: distance to leader}, 
if $u$ is a follower then all the outgoing assets of $u$ will be claimed before time $\tau_{xu}$,
and if $u$ is a non-main leader then all of $u$'s ougoing assets in its parent component will be claimed before time $\tau_{xu}$.
So $u$ can claim all incoming assets at time $\tau_{xu}$.


\emph{Safety.}
The proof of the safety condition for the main leader $\leader$ and pure followers is the same as in
Protocol~$\ProtBDP$ for bottleneck digraphs. So here we focus only on non-main leaders.

Let $b_j$ be a non-main leader whose parent component is $G_i$. Assume that $b_j$ follows the protocol.
So, according to Protocol~$\ProtRDP$,
$b_j$ will create its outgoing contracts in $G_j$ at time $0$.
Before creating its outgoing contracts in $G_i$, $b_j$ checks if \emph{all} incoming contracts are
created. If any of its incoming contracts is not created or if any value in this contract
is not as specified in the protocol, $b_j$ will abort without creating its outgoing contracts
in $G_i$. Thus its outcome will be $\NoDeal$. 

So assume that all incoming contracts of $b_j$ are created and correct; in particular all incoming
contracts in $G_j$ have hashlock $h_j$ and all incoming contracts in $G_i$ have the same hashlock $h$ (which may or may not be equal to $h_i$).
Then $b_j$ creates
its outgoing contracts in $G_i$, as in the protocol. We now need to argue that if any of $b_j$'s
outgoing assets is successfully claimed then $b_j$ successfully claims \emph{all} its
incoming assets.

Suppose that some outgoing asset of $b_j$, say $(b_j,w)$ is successfully claimed by $w$.
Two cases arise, depending on whether $w$ is in $G_i$ or $G_j$.

If $w\in G_i$ (the parent component of $b_j$) then from contract $(b_j,w)$ will provide
$b_j$ with some secret value $s$ for which $H(s) = h$, because $b_j$ used $h$ for its
outgoing contracts in $G_i$. At this point, $b_j$ has both secret values $s$ and $s_j$, and, by formula~\eqref{eqn: distance to leader},
the timeout of all incoming contracts of $b_j$ is greater than the timeout of $(b_j,w)$.
Therefore $b_j$ has the correct secrets and
at least one time unit to claim all incoming contracts, and its outcome will be $\Deal$ or $\Discount$,
thus acceptable.

In the second case, $w\in G_j$, the home component of $b_j$. For $w$ to successfully claim
$(b_j,w)$, it must have the value of $s_j$. But, as $b_j$ follows the protocol,
it releases $s_j$ only when claiming all incoming assets. So at this time $b_j$ already has all incoming assets.
Therefore in this case the outcome of $b_j$ is also either $\Deal$ or $\Discount$.
\end{proof}



\section{A Characterization of Digraphs that Admit {\SingleHashLock}-Based Protocols}
\label{sec: characterization of digraphs that admit shl protocols}


In this section we provide a full characterization of digraphs that have HTLC-based protocols,
showing that these are exactly the reuniclus digraphs. The concept of HTLC-based protocols
was informally introduced in Section~\ref{sec: introduction}. For our characterization we need a more
precise definition that we provide next.


\smallskip

\myparagraph{Formal definition of HTLC-based protocols.}
We will say that a protocol $\protP$ is an \emph{HTLC-based protocol} if $\protP$
consists of discrete time steps, each taking one time unit, and in each step any party $v$
is allowed to execute any number of the following actions:
\begin{description}\setlength{\itemsep}{-0.03in}
\item{\emph{Create a secret/hashlock pair}:} $v$ can create a secret value $s_v$ and compute
	the corresponding hashlock $h_v = H(s_v)$. We assume that $v$ can create only one hashlock.
\item{\emph{Create contract}:} 
If $(v,w)\in G$, $v$ can create an HTLC for transferring the asset $(v,w)$ to $w$.
	This contract can be created only once. 
	It must be secured with a hashlock $h_x$ of some party $x$ that created a secret/hashlock pair. 
\item{\emph{Claim asset:}}
 If $(u,v)\in G$ and the contract $(u,v)$ is placed and and not yet expired (that is, the
 current time does not exceed the timeout value $\tau_{uv}$ of this contract), 
 then $v$ can claim asset $(u,v)$, providing that $v$ knows the secret value $s_y$ 
 corresponding to the hashlock value $h_y$ that protects this contract.
\end{description}
Parties have read access to the incoming contracts, so that they can verify their
correctness and obtain their hashlock values. This allows a protocol to propagate
the hashlock values. But the model also allows 
the parties to make the hashlock values public or distribute them selectively to other parties.
There are no restrictions on local computation, with one obvious exception, namely that the parties
cannot break any cryptographic tools used in smart contracts.

\smallskip

We now turn to our characterization, captured by Theorem~\ref{thm: main theorem} in Section~\ref{sec: introduction},
that we restate here for convenience:


\begin{theoremrestated}{\ref{thm: main theorem}}
A swap digraph $G$ has an atomic HTLC-based protocol if and only if $G$ is a reuniclus digraph.
\end{theoremrestated}

The rest of this section is devoted to the proof of Theorem~\ref{thm: main theorem}.
By straightforward inspection, Protocol~$\ProtRDP$ from Section~\ref{subsec: protocol for reuniclus digraphs}
is HTLC-based: each party creates at most one secret/hashlock pair, and all
contracts are created and claimed following the rules detailed above. 
This already proves 
the $(\Leftarrow)$ implication in Theorem~\ref{thm: main theorem}.

It remains to prove the $(\Rightarrow)$ implication, namely that the existence of an atomic HTLC-based protocol
implies the reuniclus property of the underlying graph. We divide the proof into
two parts. First, in Section~\ref{subsec: basic properties of shl-based protocols} we
establish some basic properties of HTLC-based protocols.
Using these properties, we then wrap up the proof of the $(\Rightarrow)$ implication
in Section~\ref{subsec: proof of characterization}.


\subsection{Basic Properties of HTLC-Based Protocols}
\label{subsec: basic properties of shl-based protocols}



Let $\protP$ be an HTLC-based protocol for a strongly connected
digraph $G$, and for the rest of this section assume that $\protP$ is atomic. 
We now establish some fundamental properties that must be satisfied by $\protP$. 
To be more precise, whenever we say that $\protP$ has a certain property, we mean that this property is
satisfied in the \emph{conforming execution} of $\protP$ on $G$, 
that is when all parties are \emph{conforming} (or, equivalently, \emph{honest}), that is they follow $\protP$.

Withouth loss of generality
we can assume that in $\protP$ each asset $(u,v)$ is claimed by $v$ exactly at the expiration time 
$\tau_{uv}$ of contract $(u,v)$. This is because otherwise we can decrease the
timeout value $\tau_{uv}$ to the time step when $v$ claims asset $(u,v)$ in $\protP$, and
after this change $\protP$ remains HTLC-based and atomic.
Note also that the liveness property of $\protP$ implies that $\tau_{uv}$ must be
larger than the creation time of contract $(u,v)$.

Most of the proofs of protocol properties given below use the same fundamental approach, 
based on an argument by contradiction: we show that if $\protP$ did not satisfy the given property then
there would exist a (non-conforming) execution of $\protP$ in which some parties, by deviating from $\protP$,
would force a final outcome of some conforming party to be unacceptable, thus violating the safety property.


\begin{lemma}\label{lem: all contracts placed before claiming}
If some party successfully claims an incoming asset at some time $t$,
then all contracts in the whole graph must be placed before time $t$.
\end{lemma}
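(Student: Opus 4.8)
The plan is to argue by contradiction, following the general template described just above the lemma for establishing properties of $\protP$: assuming the conclusion fails, I would exhibit a \emph{non-conforming} execution in which some honest party is forced into an unacceptable outcome, contradicting safety. So suppose that in the conforming execution some party $v$ claims an incoming asset $(u,v)$ at time $t$, and yet some contract $(p,q)$ is created only at some time $t'\ge t$ (i.e.\ not strictly before $t$). I would let the seller $p$ deviate by simply never creating $(p,q)$, with every other party conforming to $\protP$. The first step is a coincidence claim: since $(p,q)$ was not due before step $t$, no conforming party observes any difference between this execution and the conforming one before step $t$, so $v$'s state entering step $t$ is unchanged; using the normalization that each asset is claimed exactly at its timeout, $v$ still holds the secret protecting $(u,v)$ and still claims it at step $t$. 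Hence in the deviating execution the asset $(u,v)$ is transferred (to $v$, releasing its secret to $u$), whereas $(p,q)$ is never created and stays with $p$.

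The second step is a cascade argument in the spirit of Case~2 of the proof of Lemma~\ref{lem: live+safe->nash}. Using strong connectivity, fix a directed path from the buyer $q$ to the seller $u$. Since $q$ is honest and fails to receive its incoming asset $(p,q)$, safety forces $q$'s outcome to be acceptable, so $q$ must relinquish none of its outgoing assets; thus the first arc of the path is untransferred. Propagating forward, every honest vertex on the path that fails to receive an incoming asset must, by acceptability, also relinquish all of its outgoing assets, so each arc of the path is untransferred --- in particular some incoming arc of $u$. But $u$ is honest and did relinquish $(u,v)$, so acceptability forces $u$ to have received \emph{all} of its incoming assets, a contradiction. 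This would show that no contract can remain unplaced once a claim occurs.

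The step I expect to be the main obstacle is making this cascade actually close, since it silently requires a path from $q$ to $u$ that avoids the deviating seller $p$ (and requires $p\neq u$). If $p$ lies on every $q$-to-$u$ path --- for instance when $p$ is a separating vertex --- the forward propagation stalls at $p$ and strands only $p$ itself, which is harmless because $p$ is the deviator. I would address this in one of two ways. First, rather than withholding an arbitrary late contract, I would argue that among all contracts created at time $\ge t$ one can always choose $(p,q)$ with $p\neq u$ and with $q$ able to reach $u$ off $p$, exploiting strong connectivity together with the fact that the contract $(u,v)$ being claimed is itself placed strictly before $t$. If that routing argument proves awkward, I would instead switch from a safety violation to a strong-Nash violation that uses the HTLC structure directly: a premature claim reveals a secret before all contracts are in place, which lets the buyer of a later contract protected by that same secret claim it with no reciprocal transfer, reaching a $\FreeRide$ or $\Discount$ outcome strictly preferred to $\Deal$. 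This second viewpoint is essentially \emph{why} HTLC-based protocols must place every contract before any asset is claimed, and it is the natural fallback.
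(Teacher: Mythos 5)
Your first two paragraphs reproduce the paper's own argument: withhold the unplaced contract, observe that the resulting execution is indistinguishable from the conforming one up to time $t$ so that $v$ still claims $(u,v)$, and then use safety plus a strong-connectivity path from the stranded buyer to $u$ to exhibit an honest party whose incoming asset is untransferred but whose outgoing asset is transferred --- the paper phrases your forward cascade as finding the ``switch'' vertex $u_j$ on the path $x, y, \ldots, u, v$ for which $(u_{j-1},u_j)$ is not transferred but $(u_j,u_{j+1})$ is, which is the same argument. The obstacle you flag in your third paragraph (that the path must avoid the deviating seller $p$ and that $p\neq u$) is a genuine subtlety, but the paper does not address it either: its proof takes an arbitrary path from $y$ to $u$ and simply asserts that the switch vertex is honest, so your proposal matches the published proof, and your sketched repairs go beyond what the paper itself provides.
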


\begin{proof}
Assume that a party $v$ successfully claims an asset $(u,v)$ at time $t$. 
Towards contradiction, suppose that there is some
arc $(x,y)$ for which the contract is still not placed at time $t$.
Since $G$ is strongly connected, there is a path  $y= u_1,u_2,...,u_p = u$  from $y$ to $u$ in $G$.
Let also $u_0 =x$ and $u_{p+1} = v$. Now consider an execution of $\protP$ in which all parties except $x$ are conforming,
$x$ follows $\protP$ up to time $t-1$, but later it never creates contract $(x,y)$. 
This execution is indistinguishable from the conforming execution up until time $t-1$, so
at time $t$ node $v$ will claim contract $(u,v)$.
Since the first asset on path $u_0,u_1,...,u_{p+1}$ is not transferred and the last one is, 
there will be a party $u_j$ on this path, with $1\le j \le p$,
for which asset $(u_{j-1},u_j)$ is not transferred but $(u_j,u_{j+1})$ is.
But then the outcome of $u_j$ is unacceptable even though $u_j$ is honest, 
contradicting the safety property of $\protP$.
\end{proof}

Lemma~\ref{lem: all contracts placed before claiming} is important: it implies that
$\protP$ must consist of two phases:
the \emph{contract creation} phase, in which all parties place their outgoing contracts (by the liveness property,
all contracts must be created),
followed by the \emph{asset claiming} phase, when the parties claim their incoming assets. 

Recall that by a \emph{contract protected by $x$} we mean a created contract with hashlock $h_x$. (Here,
$x$ is not necessarily the party that created the contract.)


\begin{lemma}\label{lem: creating contracts}
Suppose that at a time $t$ a party $v$ creates an outgoing contract  
protected by a party different than $v$.
Then  all $v$'s incoming contracts must be created before time $t$.
\end{lemma}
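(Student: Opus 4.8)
The plan is to argue by contradiction, reusing the template of Lemma~\ref{lem: all contracts placed before claiming}: if the property fails, a suitable coalition can deviate from $\protP$ and force the honest party $v$ into an $\Underwater$ outcome, contradicting safety. Concretely, suppose that at time $t$ party $v$ creates an outgoing contract $(v,w)$ protected by a hashlock $h_x$ with $x\neq v$, yet some incoming contract $(u,v)$ of $v$ is \emph{not} created before time $t$ in the conforming execution. I would exhibit a non-conforming execution in which the coalition $C=\braced{u,x,w}$ deviates while $v$ stays honest; note that $u\neq v$, $w\neq v$ and $x\neq v$ guarantee $v\notin C$, so $v$ is a conforming party whose outcome is constrained by safety.

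The parties in $C$ follow $\protP$ faithfully through step $t-1$, so this execution is indistinguishable from the conforming one up to time $t-1$. Since $v$'s action at time $t$ is determined by the state at the end of step $t-1$, the honest $v$ still creates $(v,w)$ protected by $h_x$ at time $t$, exactly as in the conforming execution. From time $t$ onward the coalition deviates in two ways. First, $u$ never creates the contract $(u,v)$; this is consistent with conforming behavior through $t-1$, because in the conforming execution $(u,v)$ is created only at some time $\ge t$. Consequently $v$ can never claim the asset $(u,v)$, so this incoming asset of $v$ is not transferred. Second --- and here the hypothesis $x\neq v$ is essential --- since $x$ owns the secret $s_x$ with $H(s_x)=h_x$, it simply hands $s_x$ to $w$, who uses it to claim $(v,w)$ before its timeout $\tau_{vw}$ (a valid claim window exists because $\tau_{vw}>t$). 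Crucially, $v$ does not control $s_x$, so it is powerless to prevent this claim no matter how it behaves after it detects that $(u,v)$ is missing. Thus $v$'s outgoing asset is transferred while one of its incoming assets is not, placing $v$ in $\Underwater$ and contradicting the safety of $\protP$.

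I expect the only delicate point to be the indistinguishability/timing argument: I must ensure that $v$'s decision to place $(v,w)$ at time $t$ cannot depend on the (possibly simultaneous, at time $t$, or later) creation of $(u,v)$, so that the honest $v$ genuinely commits its outgoing asset before it could learn that the incoming contract is absent. This is handled exactly as in Lemma~\ref{lem: all contracts placed before claiming}, by having every deviator conform through step $t-1$ and invoking that an action taken at step $t$ is a function only of the state at the end of step $t-1$. Everything else is routine: the contrast with an outgoing contract protected by $v$ itself (where $v$ would control the secret and could withhold it) is precisely what the condition ``protected by a party different than $v$'' isolates, and it is the sole lever the coalition needs to seize control of $v$'s outgoing asset.
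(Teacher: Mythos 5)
Your proposal is correct and matches the paper's own proof essentially step for step: both run the conforming execution through step $t-1$ so that the honest $v$ still posts $(v,w)$ at time $t$, then have $x$ leak $s_x$ to $w$ (who claims $(v,w)$) while $u$ withholds $(u,v)$ forever, forcing $v$ into an unacceptable outcome and contradicting safety. The timing/indistinguishability point you flag as delicate is handled in the paper exactly as you propose, so there is nothing to fix.
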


\begin{proof}
Suppose the property in the lemma does not hold for some party $v$, that is, at some time $t$, $v$ creates
a contract $(v,w)$ protected by hashlock $h_x$ of some party $x\neq v$,
while some incoming contract $(u,v)$ is not yet created.

We consider another execution of the protocol. In this execution,
$v$ is conforming. Also, up until time $t-1$ all parties follow the
protocol, so that $v$'s behavior at time $t$ will not change and it will create contract $(v,w)$.
Then at time $t$ some parties will deviate from the protocol. Specifically, 
(i) at time $t$ party $x$ will give its 
secret value $s_x$ to $w$ (this is needed only if $x\neq w$), 
(ii) using $s_x$, party $w$ will claim asset $(v,w)$ at time $t+1$,
and (iii)  $u$ will never create contract $(u,v)$.
As a result, $v$ will lose its asset $(v,w)$ without getting asset $(u,v)$, so $v$'s
outcome is not acceptable. This contradicts the safety property of $\protP$,
so we can conclude that $\protP$ must satisfy the lemma.
\end{proof}


Consider now the snapshot of of $\protP$ right after the contract creation phase, when all contracts are 
already in place but none of the assets are yet claimed. 
The corollary below follows directly from Lemma~\ref{lem: creating contracts}.

\begin{corollary}\label{cor: paths and cycles creation times}
(a)
If on some path each contract except possibly first is not protected by its seller, then along this path
the contract creation times strictly increase.
(b) 
Each cycle must contain a contract protected by its seller.
\end{corollary}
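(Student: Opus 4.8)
The plan is to derive both parts directly from Lemma~\ref{lem: creating contracts}, which controls the relative creation order of a party's incoming versus outgoing contracts whenever an outgoing contract is protected by someone other than its seller. Since by Lemma~\ref{lem: all contracts placed before claiming} all contracts are created during a single contract-creation phase, every contract has a well-defined creation time, so the corollary is purely a statement about these creation times in the conforming execution.

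For part~(a), I would write the path as $u_0,u_1,\dots,u_k$, so that its contracts are $(u_0,u_1),(u_1,u_2),\dots,(u_{k-1},u_k)$, and the hypothesis says that for every $i\ge 1$ the contract $(u_i,u_{i+1})$ is protected by a party different from its seller $u_i$. Fixing any such $i$ and applying Lemma~\ref{lem: creating contracts} to $u_i$ at the moment it creates the outgoing contract $(u_i,u_{i+1})$, all of $u_i$'s incoming contracts --- in particular the path contract $(u_{i-1},u_i)$ --- must already be created strictly earlier. Chaining this strict inequality for $i=1,2,\dots,k-1$ yields that the creation times of $(u_0,u_1),(u_1,u_2),\dots,(u_{k-1},u_k)$ are strictly increasing, which is exactly the claim.

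For part~(b), I would argue by contradiction. Suppose some cycle $u_0,u_1,\dots,u_{k-1},u_k=u_0$ contains no contract protected by its seller; then every contract $(u_i,u_{i+1})$ on it is protected by a party other than $u_i$. This is precisely the hypothesis of part~(a), but now holding for \emph{every} edge of a closed walk, with no distinguished ``first'' contract exempt. Applying the same instance of Lemma~\ref{lem: creating contracts} to each $u_i$, with indices taken modulo $k$, gives that the creation time of $(u_{i-1},u_i)$ is strictly less than that of $(u_i,u_{i+1})$ for every $i$; following these strict inequalities once around the cycle would force the creation time of any fixed edge to be strictly less than itself, a contradiction. Hence the cycle must contain at least one contract protected by its seller.

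I do not expect a genuine obstacle here: both parts are immediate consequences of Lemma~\ref{lem: creating contracts} once the single-phase structure from Lemma~\ref{lem: all contracts placed before claiming} is invoked so that ``creation time'' is meaningful. The only point that needs a little care is the index bookkeeping --- making explicit that the exemption of the first contract in part~(a) is exactly what fails when the path is closed into a cycle in part~(b), so that the resulting cyclic chain of strict inequalities is genuinely contradictory.
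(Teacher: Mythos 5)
Your proof is correct and follows the paper's own route: the paper simply asserts that the corollary ``follows directly from Lemma~\ref{lem: creating contracts}'', and your chaining of that lemma along the path for part~(a), together with the cyclic chain of strict inequalities for part~(b), is exactly the intended argument made explicit.
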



Next, we establish some local properties of $\protP$. Namely, in the next four lemmas we will show that
for each party $v$ there is at most one other party that protects contracts involving $v$.
We will also establish some relations between the timeout values of the contracts involving $v$.

\begin{lemma}\label{lem: v with incoming contract protected by x}
If a party $v$ has an incoming contract protected by some party $x$ different from $v$ then
\begin{description}\renewcommand{\itemsep}{-0.05in}
\item{(a)} Party $v$ has at least one outgoing contract protected by $x$.
\item{(b)} All contracts involving $v$ are protected either by $v$ or by $x$. 
\end{description}
\end{lemma}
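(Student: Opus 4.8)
The plan is to route everything through the flow of secrets. The first thing I would pin down is a \emph{secret-source principle}: because $H$ is a one-way permutation, the secret $s_y$ of a party $y$ has a unique origin, namely $y$ itself, and no party can produce it from $h_y$. Consequently, an honest party $v$ comes to know a secret $s_y$ with $y\neq v$ only by having one of its outgoing contracts protected by $y$ claimed, since claiming such a contract is exactly the mechanism by which the smart contract hands $s_y$ back to the seller $v$. Together with Lemma~\ref{lem: all contracts placed before claiming}, which splits $\protP$ into a creation phase followed by a claiming phase, this is the only tool I need.

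For part~(a) the principle applies directly. By liveness, in the conforming execution $v$ reaches $\Deal$ and therefore successfully claims its incoming contract $(u,v)$ protected by $x$; this claim requires presenting the secret $s_x$, and since $x\neq v$ the secret-source principle forces some outgoing contract of $v$ protected by $x$ to have been claimed. Hence such an outgoing contract exists.

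For part~(b) I would reduce to a single \emph{core claim}: a party $v$ cannot simultaneously have an incoming contract protected by $p$ and an outgoing contract protected by $q$ with $p\neq q$ and $p,q\neq v$. Granting this, (b) follows at once. We are given an incoming contract protected by $x$, and part~(a) supplies an outgoing contract protected by $x$. If some further contract involving $v$ were protected by a third party $x'\notin\{v,x\}$, then either it is incoming, in which case it pairs with the outgoing $x$-contract, or it is outgoing, in which case it pairs with the given incoming $x$-contract; either way we obtain an incoming/outgoing pair protected by two distinct non-$v$ parties, contradicting the core claim. Thus every contract involving $v$ is protected by $v$ or by $x$.

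The core claim is the heart of the matter, and I would prove it by exhibiting a safety-violating execution. Keep the creation phase conforming, so that all contracts, in particular $(u,v)$ and $(v,w)$, are placed, and keep $v$ honest throughout. In the claiming phase the coalition does two things: (i) it corrupts $p$ so that after the creation phase $p$ never claims anything and never discloses $s_p$; by the secret-source principle no party other than $p$ ever acquires $s_p$, so \emph{no} $p$-protected contract anywhere in the digraph is ever claimed, and in particular $v$ can never claim its incoming $(u,v)$; and (ii) it has $q$ disclose $s_q$ to $w$ and has $w$ immediately claim $(v,w)$, so that $v$ irrevocably relinquishes the outgoing asset $(v,w)$. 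The honest party $v$ has then relinquished an outgoing asset yet failed to acquire an incoming one, so its outcome is unacceptable (type $\Underwater$), contradicting safety. The step I expect to be the main obstacle is making this ``withholding'' argument airtight: I must show that a single corrupt party refusing to release $s_p$ genuinely blocks every claim of a $p$-protected contract across the whole graph, which rests on a short induction over the time steps, certified by one-wayness of $H$, that $p$ is the sole origin of $s_p$. I also need to confirm that $v$'s permitted rational reaction once it detects the deviation cannot rescue it; but since $(v,w)$ is already gone and $s_p$ provably never enters the system, no action of $v$ can recover $(v,w)$ or manufacture $s_p$, so $v$ necessarily remains $\Underwater$.
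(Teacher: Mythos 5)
Your proof is correct and follows essentially the same route as the paper's: part (a) is the same liveness/secret-flow argument, and your core claim is established by the very execution the paper uses for part (b) --- the protector of the incoming contract silently withholds its secret while the protector $q$ of the outgoing contract hands $s_q$ to $w$, who claims $(v,w)$, leaving honest $v$ underwater (the paper merely organizes this asymmetrically, proving the outgoing case directly and deriving the incoming case from part (a)). One small repair: in the edge case $w=p$ your prescription is self-contradictory ($p$ must \emph{never claim anything}, yet $w$ must claim $(v,w)$); the withholding should only forbid $p$ from using or disclosing $s_p$ --- which is all your time-step induction actually needs --- since $w$'s claim of the $q$-protected contract $(v,w)$ releases only $s_q$, never $s_p$.
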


\begin{proof}
Let $(u,v)$ be an incoming contract protected by $x$, where $x\neq v$.

(a) 
This claim is easy: if $v$ didn't have an outgoing contract protected by $x$, then even in the conforming
execution it would never receive the secret value $s_x$ needed to claim $(u,v)$.
That would violate the liveness property of $\protP$.

(b) 
We first show that all outgoing contracts must be protected by $v$ or $x$.
Suppose that it is not true, namely that some outgoing contract $(v,w)$ is protected by some party $y$, where $y\notin\braced{v,x}$.
Let $t$ be the time right after all contracts are created. We change the behavior
of some parties, as follows. Up to time $t-1$ they follow the protocol. At time $t$,
$y$ can provide $w$ with secret $s_y$, allowing $w$ to claim $(v,w)$ at time $t+1$.
All parties other than $v$ and $w$ don't do anything starting at time $t$. Then $v$ will not receive the secret $s_x$
needed to claim asset $(u,v)$, so it will end up
in an unacceptable outcome, contradicting the safety property.
We can thus conclude that all outgoing contracts not protected by $v$ must be protected by $x$. 

Next, consider incoming contracts not protected by $v$. If $(u',v)$ is an incoming contract of
$v$ protected by some party $x'\notin\braced{v,x}$ then, by part~(a), $v$ would have at least one
outgoing contract protected by $x'$, contradicting that all outgoing contracts are protected either by $v$ or $x$.
This completes the proof of~(b).
\end{proof}


\begin{lemma}\label{lem: path contracts not protected by seller}
Let $P = u_1,u_2,...,u_{k}$ be a simple path whose last contract is protected by some 
party $z \notin \braced{u_1,u_2,...,u_{k-1}}$.
Then for
each $i = 1,...,k-1$, contract $(u_i,u_{i+1})$ is protected by one of the parties
		$u_{i+1}, u_{i+2}, ..., u_{k-1},z$.
Consequently, each contract on $P$ is not protected by its seller.
\end{lemma}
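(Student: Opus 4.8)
The plan is to prove the statement by \emph{backward} induction on $i$, running from $i=k-1$ down to $i=1$, using the single-vertex structural constraint of Lemma~\ref{lem: v with incoming contract protected by x}(b) as the engine. The guiding idea is that at each path vertex $v$, all contracts touching $v$ are protected by at most two parties, namely $v$ itself and at most one other party; so knowing the protector of the contract leaving $v$ toward the end of the path will pin down the protector of the contract entering $v$.

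The base case $i=k-1$ is immediate: the set $\{u_{i+1},\dots,u_{k-1},z\}$ degenerates to $\{z\}$, and $(u_{k-1},u_k)$ is protected by $z$ by hypothesis.

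For the inductive step, I would assume the claim for all larger indices; in particular $(u_{i+1},u_{i+2})$ is protected by some $x\in\{u_{i+2},\dots,u_{k-1},z\}$. First I would record that $x\neq u_{i+1}$: since $P$ is simple, $u_{i+1}$ differs from $u_{i+2},\dots,u_{k-1}$, and $u_{i+1}\neq z$ because $u_{i+1}\in\{u_1,\dots,u_{k-1}\}$ (as $i+1\le k-1$) whereas $z\notin\{u_1,\dots,u_{k-1}\}$. Hence $u_{i+1}$ owns an \emph{outgoing} contract protected by a party other than itself. I would then split on the protector of the incoming contract $(u_i,u_{i+1})$. If it is $u_{i+1}$ itself, we are done, since $u_{i+1}$ lies in the target set for index $i$. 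Otherwise it is some $w\neq u_{i+1}$, so $u_{i+1}$ has an \emph{incoming} contract protected by $w\neq u_{i+1}$; then Lemma~\ref{lem: v with incoming contract protected by x}(b) forces every contract at $u_{i+1}$---in particular $(u_{i+1},u_{i+2})$---to be protected by $u_{i+1}$ or $w$. Since that contract is protected by $x\neq u_{i+1}$, we get $x=w$, so $(u_i,u_{i+1})$ is protected by $x\in\{u_{i+2},\dots,u_{k-1},z\}$. In both cases the protector lies in $\{u_{i+1},\dots,u_{k-1},z\}$, closing the induction.

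The final ``consequently'' clause then comes for free: the seller of $(u_i,u_{i+1})$ is $u_i$, which is absent from $\{u_{i+1},\dots,u_{k-1},z\}$ (it differs from the later path vertices by simplicity, and from $z$ since $z\notin\{u_1,\dots,u_{k-1}\}$), so no contract on $P$ is protected by its seller. The step demanding the most care is the inductive case analysis: the whole argument hinges on reading Lemma~\ref{lem: v with incoming contract protected by x}(b) as a two-element ``palette'' $\{v,x\}$ for all contracts at each vertex $v$, and on transporting the already-identified protector of the downstream outgoing contract backward to identify the protector of the upstream incoming contract. I do not expect to need any creation-time ordering or strong-connectivity machinery here, since the statement is purely a local structural one resting entirely on the per-vertex constraint.
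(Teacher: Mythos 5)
Your proof is correct and follows essentially the same route as the paper's: backward induction from $i=k-1$ down to $i=1$, invoking Lemma~\ref{lem: v with incoming contract protected by x}(b) at each vertex, with path simplicity and $z\notin\braced{u_1,\dots,u_{k-1}}$ ruling out the degenerate identifications. If anything, your explicit case split (protector of $(u_i,u_{i+1})$ equals $u_{i+1}$, versus some $w\neq u_{i+1}$ which is then forced to equal $x$) spells out a step the paper's proof leaves implicit, since the cited lemma's hypothesis concerns an \emph{incoming} contract while the inductive hypothesis supplies information about an \emph{outgoing} one.
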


\begin{proof}
The second part of the lemma follows trivially from the first.

We prove the first part by induction, proceeding backwards on $P$.
The claim is true for $i=k-1$, because, by our assumption, contract $(u_{k-1},u_k)$ is protected by $z$. 
Inductively (going backwards on $P$), assume that the lemma holds 
for $i = k-1,k-2,...,j$, where $j\ge 2$. 
In particular, by the inductive assumption, $(u_{j},u_{j+1})$ is protected by 
some party $u' \in \braced{ u_{j+1}, u_{j+2},...,u_{k-1}, z }$,
Since $P$ is a simple path, we have $u'\neq u_{j}$. Then
Lemma~\ref{lem: v with incoming contract protected by x} implies that
contract $(u_{j-1},u_j)$ is protected either by $u_j$ or by $u'$.
Therefore contract $(u_{j-1},u_j)$ is protected by 
one of the parties $u_{j}, u_{j+1},...,u_{k-1},z$.
So the lemma holds for $i = j-1$, completing the inductive step.
\end{proof}


\begin{lemma}\label{lem: v if all incoming  then all outgoing v}
If all incoming contracts of a party $v$ are protected by $v$ then
all outgoing contracts of $v$ are also protected by $v$.
\end{lemma}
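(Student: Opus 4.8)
The plan is to argue by contradiction and reduce everything to the cycle property in Corollary~\ref{cor: paths and cycles creation times}(b). So assume that all incoming contracts of $v$ are protected by $v$, yet some outgoing contract $(v,w)$ is protected by a party $x\neq v$. The tempting first move is a direct safety attack --- have $x$ hand $s_x$ to $w$ so that $w$ claims $(v,w)$ while an adversarial in-neighbour starves $v$ of an incoming asset --- but I expect this route to be a dead end, and noticing why is the conceptual crux. It is neutralised by Lemma~\ref{lem: creating contracts}: since $(v,w)$ is protected by someone other than $v$, the honest $v$ will not even create $(v,w)$ until \emph{all} of its incoming contracts are in place, and once an incoming contract protected by $v$ is placed, $v$ can always redeem it itself using $s_v$. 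Hence $v$ can never be forced into an unacceptable outcome this way, and the statement must instead be a purely structural fact.

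Second, I would build a cycle witnessing that contradiction. Using strong connectivity, fix a simple path $R = r_0 r_1 \cdots r_n$ with $r_0 = w$ and $r_n = v$ (note $w\neq v$, since $G$ has no self-loops). Its final contract $(r_{n-1},v)$ is an incoming contract of $v$, hence protected by $v$ by hypothesis. Because $R$ is simple and terminates at $v$, we have $v\notin\{r_0,\dots,r_{n-1}\}$, so Lemma~\ref{lem: path contracts not protected by seller} applies with $z=v$ and yields that every contract along $R$ is protected by a strictly later vertex of $R$ (or by $v$), and in particular none of them is protected by its own seller. The remaining arc $(v,w)$ is likewise not protected by its seller, as it is protected by $x\neq v$.

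Finally, I would close the loop. The walk $C = v, w=r_0, r_1, \dots, r_{n-1}, v$ obtained by prefixing the arc $(v,w)$ to $R$ is a genuine simple cycle, since its only repeated vertex is $v$ (because $v\notin\{r_0,\dots,r_{n-1}\}$), and by the previous step none of its contracts is protected by its seller. This contradicts Corollary~\ref{cor: paths and cycles creation times}(b), which asserts that every cycle contains a contract protected by its seller. I expect the only real obstacle to be the conceptual one already flagged: realising that the obvious adversarial argument is blocked by Lemma~\ref{lem: creating contracts}, so the lemma is not about safety at all but about contract protection, with Lemma~\ref{lem: path contracts not protected by seller} supplying exactly the return path whose contracts are all foreign-protected.
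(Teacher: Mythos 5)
Your proof is correct and follows essentially the same route as the paper's: take a simple path from $w$ back to $v$, use the hypothesis that its final (incoming) contract is protected by $v$ together with Lemma~\ref{lem: path contracts not protected by seller} to conclude no contract on the path is seller-protected, then prepend the arc $(v,w)$ to form a cycle contradicting Corollary~\ref{cor: paths and cycles creation times}(b). Your opening paragraph ruling out a direct safety attack and your explicit check that $v$ is not an internal vertex of the path are harmless additions; the core argument is identical.
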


Intuitively, if $v$ had an outgoing contract protected by some other party $x$ but
not an incoming contract protected by $x$, then this outgoing contract would
be ``redundant'' for $v$, since $v$ does not need
the secret from this contract to claim an incoming contracts. The lemma
shows that the issue is not just redundancy --- this is in fact not even possible if the protocol is atomic.

\begin{proof}
We prove the lemma by contradiction.
Suppose that all $v$'s incoming contracts are protected by $v$, but $v$
has an outgoing contract $(v,w)$ protected by some other party $x$.
Let $P = u_1,u_2,...,u_p$ be a simple path from $w$ to $v$, that is $u_1 = w$ and $u_p = v$. 
Then Lemma~\ref{lem: path contracts not protected by seller} implies that
all contracts on $P$ are not protected by their sellers.
The contract $(u_p,u_1) = (v,w)$ is not protected by its seller $v$. 
This would give us a cycle, namely $C = u_p,u_1,...,u_{p-1},u_p$,
in which each contract is not protected by the seller.
But this contradicts Corollary~\ref{cor: paths and cycles creation times}(b),
completing the proof of the lemma.
\end{proof}


\begin{lemma}\label{lem: v outgoing x => incoming x}
If a party $v$ has an outgoing contract protected by some party $x$ different from $v$ 
then it has an incoming contract protected by $x$.
\end{lemma}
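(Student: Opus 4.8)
The plan is to mirror the structure of Lemma~\ref{lem: v if all incoming then all outgoing v}, since the statement we want is essentially its contrapositive-flavored companion. I am given that $v$ has an outgoing contract $(v,w)$ protected by some party $x \neq v$, and I must produce an incoming contract of $v$ protected by $x$. The natural strategy is to work with a simple path that closes $(v,w)$ into a cycle and then invoke the structural constraints from the earlier lemmas, rather than argue directly from the safety property again.

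Concretely, since $G$ is strongly connected, I would fix a simple path $P = u_1, u_2, \dots, u_p$ from $w = u_1$ back to $v = u_p$. The outgoing contract $(u_p, u_1) = (v, w)$ is protected by $x$, and $x \notin \braced{u_1, \dots, u_{p-1}}$ would hold provided $x$ does not lie on the interior of the path; so I would first want to choose the path (or the representative $x$) carefully, or handle the case where $x$ sits on $P$ separately. Assuming $x$ is off the interior, I would apply Lemma~\ref{lem: path contracts not protected by seller} to the path $w = u_1, \dots, u_p = v$ augmented by the protecting party, concluding that each contract along $P$ is protected by one of $u_2, \dots, u_{p-1}, x$ and, in particular, that contract $(u_{p-1}, u_p) = (u_{p-1}, v)$ — an \emph{incoming} contract of $v$ — is protected by one of $\braced{u_p, \dots, x} = \braced{v, x}$. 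If it is protected by $x$ we are done. The remaining worry is that it is protected by $v$ itself.

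To rule out the bad case, I would invoke Lemma~\ref{lem: v with incoming contract protected by x}(b): the key local dichotomy says that once $v$ has \emph{any} contract protected by a party $x \neq v$, all contracts involving $v$ are protected only by $v$ or by $x$. Combined with the contrapositive of Lemma~\ref{lem: v if all incoming then all outgoing v} — which states that if all incoming contracts of $v$ were protected by $v$ then all outgoing contracts would be too, contradicting the existence of the outgoing $x$-protected contract $(v,w)$ — it follows that $v$ must have \emph{at least one} incoming contract not protected by $v$. By the dichotomy, any such contract is protected by $x$, giving exactly the desired incoming contract. This is in fact the cleanest route: the whole statement reduces to observing that not all incoming contracts of $v$ can be protected by $v$ (Lemma~\ref{lem: v if all incoming then all outgoing v}), and that every non-$v$ protector for a contract at $v$ must coincide with $x$ (Lemma~\ref{lem: v with incoming contract protected by x}(b)).

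The main obstacle I anticipate is bookkeeping around the party $x$ possibly appearing on the chosen path $P$, which would break the disjointness hypothesis needed for Lemma~\ref{lem: path contracts not protected by seller}. Fortunately, the argument via Lemmas~\ref{lem: v with incoming contract protected by x} and~\ref{lem: v if all incoming then all outgoing v} sidesteps paths entirely, so I expect the final proof to avoid the path construction altogether and instead be a short two-step deduction: first, by Lemma~\ref{lem: v if all incoming then all outgoing v}, the existence of the $x$-protected outgoing contract forces some incoming contract of $v$ to be protected by a party other than $v$; second, by Lemma~\ref{lem: v with incoming contract protected by x}(b), that protector can only be $x$. I would present it in that order, with the path-based argument kept in reserve as an alternative only if the local lemmas turn out to be insufficient.
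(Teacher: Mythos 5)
Your final two-step deduction is exactly the paper's own proof: first, not all incoming contracts of $v$ can be protected by $v$ (else Lemma~\ref{lem: v if all incoming then all outgoing v} would force all outgoing contracts to be protected by $v$, contradicting the hypothesis), and second, Lemma~\ref{lem: v with incoming contract protected by x}(b), triggered by that incoming contract, forces its protector to coincide with $x$. You were right to discard the path-based construction; just be careful to state the dichotomy with its actual hypothesis (an \emph{incoming} contract protected by $x \neq v$), since the ``any contract'' version applied to the outgoing contract would be circular.
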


\begin{proof}
This follows quite easily from the lemmas above.
Consider the incoming contracts of $v$. Not all of them can be protected by $v$, because
otherwise Lemma~\ref{lem: v if all incoming  then all outgoing v} would imply that
all outgoing contracts would also be protected by $v$, contradicting the assumption of the lemma.
So some incoming contract is protected by some party $x'\neq v$.
But then Lemma~\ref{lem: v with incoming contract protected by x}(b) implies
that all outgoing contracts must be protected by either $v$ or $x'$, so we must have $x' = x$.
\end{proof}


\begin{lemma}\label{lem: v incoming v iff outgoing v}
A party $v$ has an incoming contract protected by $v$ if and only if it has
an outgoing contract protected by $v$.
\end{lemma}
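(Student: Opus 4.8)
The plan is to reduce the biconditional to a single clean structural argument about the subgraph of $v$-protected contracts. Let $H_v$ denote the spanning subgraph of $G$ consisting of exactly those arcs (contracts) that are protected by $v$. The statement to prove says precisely that $v$ has positive in-degree in $H_v$ if and only if $v$ has positive out-degree in $H_v$. The key observation I would make first is that the two preceding lemmas together give a ``degree-balance'' property at every vertex of $H_v$ other than $v$ itself: if $u \neq v$ has an incoming $v$-protected contract, then Lemma~\ref{lem: v with incoming contract protected by x}(a) (applied with protecting party $v \neq u$) gives it an outgoing $v$-protected contract; conversely, if $u \neq v$ has an outgoing $v$-protected contract, then Lemma~\ref{lem: v outgoing x => incoming x} (again with $x = v \neq u$) gives it an incoming one. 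Hence every vertex $u \neq v$ that is touched by $H_v$ has both an in-arc and an out-arc in $H_v$; the only vertex whose $H_v$-degrees may be unbalanced is $v$.

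With this in hand I would prove the forward direction by contradiction. Suppose $v$ has an incoming $v$-protected contract but no outgoing one, so $v$ is a sink in $H_v$ (out-degree $0$, in-degree $\ge 1$). Starting from $v$, I would trace arcs of $H_v$ backwards: $v$ has an in-arc from some $u_0 \neq v$, which by the balance property has its own $H_v$-in-arc, and so on. Since $G$ is finite, this backward walk must eventually repeat a vertex, yielding a cycle all of whose arcs lie in $H_v$. Because $v$ has out-degree $0$ in $H_v$, it cannot lie on any cycle of $H_v$, so this cycle avoids $v$. But then every contract on the cycle is protected by $v$ while its seller is some vertex $\neq v$, so no contract on the cycle is protected by its own seller --- contradicting Corollary~\ref{cor: paths and cycles creation times}(b). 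This contradiction establishes that an incoming $v$-protected contract forces an outgoing one.

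The reverse direction is entirely symmetric: assuming $v$ has an outgoing but no incoming $v$-protected contract makes $v$ a source in $H_v$, and tracing arcs of $H_v$ forward from $v$ (using the same balance property, now in the forward direction) again produces a cycle in $H_v$ that must avoid $v$, contradicting Corollary~\ref{cor: paths and cycles creation times}(b) in the same way. I would simply remark that the argument is the mirror image of the forward case and not repeat it in full.

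I expect the only genuinely delicate point to be articulating the ``degree-balance everywhere except $v$'' observation precisely, since it is the step that fuses Lemmas~\ref{lem: v with incoming contract protected by x} and~\ref{lem: v outgoing x => incoming x} into a usable form; once that is stated, the finiteness-plus-cycle argument and its appeal to Corollary~\ref{cor: paths and cycles creation times}(b) are routine. A minor care point is ensuring that the backward (resp.\ forward) trace cannot terminate prematurely --- this is exactly guaranteed by the balance property at every intermediate vertex $\neq v$ --- so that finiteness genuinely forces a repeated vertex and hence a cycle lying entirely within $H_v$.
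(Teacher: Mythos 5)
Your proof is correct and follows essentially the same route as the paper's: both directions trace $v$-protected contracts backwards (resp.\ forwards) using Lemma~\ref{lem: v outgoing x => incoming x} and Lemma~\ref{lem: v with incoming contract protected by x}, and both conclude by exhibiting a cycle avoiding $v$ whose contracts are all protected by $v$, contradicting Corollary~\ref{cor: paths and cycles creation times}(b). Your packaging of the two lemmas as a degree-balance property of the subgraph $H_v$ is only a cosmetic reorganization of the paper's argument, not a different method.
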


\begin{proof}
$(\Rightarrow)$ 
Suppose that $v$ has an incoming contract $(u,v)$ protected by $v$. 
By Lemma~\ref{lem: v outgoing x => incoming x}, $u$ has an incoming contract $(u',u)$ protected by $v$.
If $u'\neq v$, by the same argument, $u'$ must also have an incoming contract protected by $v$.
Repeating this process, we obtain a backward path with all contracts protected by $v$. If this path did not reach
$v$ it would have to create a cycle in which all contracts are protected by a party (namely $v$) outside this cycle, 
but this would contradict Corollary~\ref{cor: paths and cycles creation times}(b).

$(\Leftarrow)$ 
The argument is symmetric to that in part~(a).
Suppose that $v$ has an outgoing contract protected by $v$, say $(v,w)$. If $w\neq v$ then,
by Lemma~\ref{lem: v with incoming contract protected by x}, $w$ has an outgoing contract protected by $v$.
Repeating this process, we obtain a path with all contracts protected by $v$. If this path did not reach
$v$ it would have to create a cycle, but this would contradict Corollary~\ref{cor: paths and cycles creation times}(b).
\end{proof}


The theorem below summarizes the local properties of the contracts involving a party $v$.
See also the illustration of this theorem in Figure~\ref{fig: timeouts involving v}.

\begin{theorem}\label{thm: local properties}
Consider the contracts involving a party $v$, both incoming and outgoing. 
{
\begin{description}\setlength{\itemsep}{-0.05in}
\item{(a)} For each party $x$ (which may or may not be $v$), $v$ has an incoming contract protected by $x$ if and only if
			$v$ has an outgoing contract protected by $x$.
\item{(b)} If there are any contracts protected by $v$, then
		  at least one incoming contract protected by $v$ has a smaller timeout than all outgoing contracts protected by $v$.

\item{(c)} There is at most one party $x\neq v$ that protects a contract involving $v$.
		For this $x$, 
		all timeouts of the outgoing contracts protected by $x$ are smaller than
		all timeouts of	the incoming contracts (no matter what party protects them). 
\end{description}
}
\end{theorem}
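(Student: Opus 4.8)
The plan is to read off parts~(a) and the uniqueness clause of~(c) directly from the local lemmas, obtain~(b) from a feasibility (liveness) argument about the single secret $s_v$, and reserve a genuinely new adversarial construction for the timeout inequality in~(c). For part~(a): when $x=v$ the statement is exactly Lemma~\ref{lem: v incoming v iff outgoing v}; when $x\neq v$ the forward implication is Lemma~\ref{lem: v with incoming contract protected by x}(a) and the reverse implication is Lemma~\ref{lem: v outgoing x => incoming x}. For the uniqueness clause of~(c) I would argue that if any party $x\neq v$ protects a contract involving $v$, then by part~(a) the party $v$ has in particular an \emph{incoming} contract protected by $x$, so Lemma~\ref{lem: v with incoming contract protected by x}(b) applies and forces every contract touching $v$ to be protected by $v$ or by $x$; hence no third protector can occur, and at most one $x\neq v$ exists.

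For part~(b) I would use the fact that in an HTLC-based protocol the value $s_v$ is known only to $v$ until $v$ reveals it, and the only mechanism by which $v$ reveals it is claiming an incoming contract protected by $v$ (which hands $s_v$ to that seller). Let $\tau_0$ be the smallest timeout among the incoming contracts protected by $v$; by the normalization that each asset is claimed exactly at its timeout, $\tau_0$ is the first moment at which $s_v$ becomes available to anyone other than $v$. Now take any outgoing contract $(v,w)$ protected by $v$: liveness requires $w$ to claim it, which needs $s_v$, and since a party can use a secret only after the step in which it first learns it, $w$ cannot hold $s_v$ before time $\tau_0$. Hence $\tau_{vw}>\tau_0$ for every outgoing contract protected by $v$, which is precisely the assertion of part~(b).

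For the timeout inequality in part~(c), let $x$ be the unique other protector and let $T$ be the largest timeout among the outgoing contracts protected by $x$; fix one such contract $(v,w^\ast)$ with $\tau_{vw^\ast}=T$, and suppose toward a contradiction that some incoming contract $(u,v)$ has $\tau_{uv}\le T$. The common device is an adversary that reveals $s_x$ as late as possible: it never lets any outgoing contract protected by $x$ be claimed before time $T$, so $v$ cannot obtain $s_x$ before $T$. If $(u,v)$ is protected by $x$, then $v$ cannot claim it without $s_x$, and since $v$ only gets $s_x$ at time $T\ge\tau_{uv}$ this is too late; having the adversary claim $(v,w^\ast)$ at time $T$ then leaves $v$ with a transferred outgoing asset and a missed incoming asset, an \Underwater outcome, contradicting safety.

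The delicate sub-case, which I expect to be the main obstacle, is $(u,v)$ protected by $v$, since then $v$ owns $s_v$ and could claim $(u,v)$ unaided, so the single adversary above does not suffice --- an honest $v$ can simply grab $(u,v)$ and still collect all incoming assets via the late $s_x$. The plan is to take $(u,v)$ to be the incoming contract protected by $v$ of \emph{smallest} timeout, so that part~(b) guarantees every outgoing contract protected by $v$ is still live just after $\tau_{uv}$, and to replace the fixed adversary with an \emph{adaptive} one that is identical up to time $T-1$ (withholding $s_x$ completely) and branches at time $T$ according to whether $v$ has by then revealed $s_v$. If $v$ did reveal $s_v$ in order to grab $(u,v)$, the adversary keeps withholding $s_x$, claims all of $v$'s now-unlocked outgoing contracts protected by $v$, and thereby strands every incoming contract protected by $x$ (one exists by part~(a)): $v$ loses an outgoing asset yet fails to collect all incoming ones, hence \Underwater. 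If instead $v$ withheld $s_v$ and so forfeited $(u,v)$, the adversary claims $(v,w^\ast)$ at time $T$, so $v$ loses that outgoing asset while $(u,v)$ has already expired, again \Underwater. Because the two strategies are indistinguishable to $v$ before the branch point (and the boundary case $\tau_{uv}=T$ is absorbed by the one-step claiming delay), $v$ cannot avoid both outcomes, contradicting safety; this is exactly where the argument must exploit the adversary's freedom to postpone its commitment until after it observes whether $v$ exposes its own secret.
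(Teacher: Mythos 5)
Your treatment of part~(a), of the uniqueness claim in part~(c), and of Case~A of the timeout inequality (incoming contract protected by $x$) coincides with the paper's proof: the same three lemmas are invoked for~(a), the same application of Lemma~\ref{lem: v with incoming contract protected by x}(b) gives uniqueness, and your withhold-$s_x$-then-claim-$(v,w^\ast)$ adversary is the paper's Case~1. Your proof of part~(b), however, takes a genuinely different route. The paper picks the outgoing $v$-protected contract $(v,w)$ of minimum timeout, follows a path of $v$-protected contracts from $w$ back to $v$ (whose existence comes from Corollary~\ref{cor: paths and cycles creation times}(b)), and uses the fact that timeouts strictly decrease along such a path, so that the path's last contract is an incoming $v$-protected contract with smaller timeout. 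You instead argue directly from liveness of the conforming execution: $s_v$ first leaves $v$ only at time $\tau_0$, the minimum timeout among incoming $v$-protected contracts, so any outgoing $v$-protected contract, which by liveness must be claimed with $s_v$, has timeout exceeding $\tau_0$. This is valid, and arguably cleaner: it uses only the claims-at-timeout normalization, the fact that in a conforming execution secrets travel only through claims, and the one-step delay on reusing a learned secret --- all assumptions the paper itself relies on --- while sidestepping the path/monotonicity machinery (and the paper's somewhat tangled internal cross-reference in that spot).

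The genuine gap is in your Case~B. Your adversary branches \emph{at time $T$} on whether $v$ has revealed $s_v$, but part~(b) only guarantees that the outgoing contracts protected by $v$ have timeouts strictly greater than $\tau_{uv}$; nothing forces any of them to survive until time $T$. Concretely, suppose every outgoing contract $(v,y)$ protected by $v$ has $\tau_{uv} < \tau_{vy} < T$, and the incoming contract $(z,v)$ protected by $x$ has $\tau_{zv} > T$. Against your adversary (conforming through contract creation, withholding $s_x$), $v$ observes no deviation before $\tau_{uv}$, so it follows $\protP$ and claims $(u,v)$ at $\tau_{uv}$; by time $T$ every $(v,y)$ has expired and its asset has been returned to $v$, and since you keep withholding $s_x$ in this branch, no outgoing contract of $v$ is ever claimed. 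Then $v$ finishes having gained $(u,v)$ and lost nothing --- a \FreeRide outcome, which is acceptable --- so no safety violation occurs and the contradiction evaporates. The repair is to move the branch point: the adversary must react \emph{immediately} to $v$'s claim rather than wait until $T$. If $v$ claims $(u,v)$ at some $t_1 \le \tau_{uv}$, the seller $u$ at once forwards $s_v$ to the buyer $y$ of an outgoing $v$-protected contract, who claims $(v,y)$ at time $t_1+1 \le \tau_{uv}+1 \le \tau_{vy}$ (this is exactly where your minimal-timeout choice of $(u,v)$ and part~(b) are needed), while $s_x$ is withheld forever so that $(z,v)$ is stranded; if $v$ never claims $(u,v)$ by $\tau_{uv}$, then $w^\ast$ claims $(v,w^\ast)$ at time $T \ge \tau_{uv}$. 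This branch-on-$\tau_{uv}$ adversary is precisely the paper's Case~2.
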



\begin{figure}[ht]
\begin{center}
\includegraphics[valign=m,width = 1.8in]{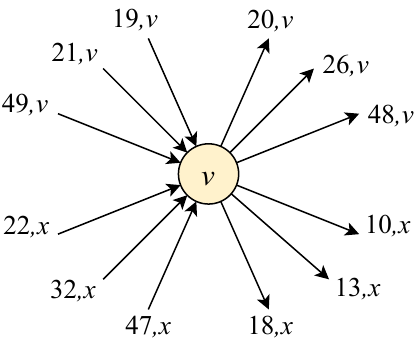}
\end{center}
\caption{Illustration of Theorem~\ref{thm: local properties}.
Each contract is labelled by its timeout and the party that protects it.
Per part~(b), the minium timout $19$ of incoming contracts protected by $v$
which is itself smaller than the minimum timeout $20$ of outgoing contracts protected by $v$.
Per part~(c), the maximum timeout $18$ of outgoing contracts protected by $x$ is smaller than
the minium timout $19$ of all incoming contracts.
}\label{fig: timeouts involving v}
\end{figure}


\begin{proof}
(a) 
This part is just a restatement of the properties established earlier.
For $x=v$, the statement is the same as in Lemma~\ref{lem: v incoming v iff outgoing v}.
For $x\neq v$, if $v$ has an incoming contract protected by $x$ then, by Lemma~\ref{lem: v with incoming contract protected by x},
it must have an outgoing contract protected by $x$, and
if $v$ has an outgoing contract protected by $x$ then, by Lemma~\ref{lem: v outgoing x => incoming x},
it must have an incoming contract protected by $x$.

(b)
Let $(v,w)$ be an outgoing contract protected by $v$ 
whose timeout value $\tau_{vw}$ is smallest.
Consider any path $P$ from $w$ to $v$ with all contracts on $P$ protected by $v$.
(This path must exist. To see this, starting from $w$ follow contracts protected by $v$.
By Corollary~\ref{cor: paths and cycles creation times}(b), eventually this process must end at $v$.)
Then part~(b) of Theorem~\ref{thm: local properties} implies that along this path
timeut values must decrease, so its last contract $(u,v)$ must satisfy $\tau_{uv} < \tau_{vw}$.
So the timeout value of $(u,v)$ is smaller than that of $(v,w)$, and thus also smaller
than all timeout values of the outgoing contracts protected by $v$.

(c)
Let $x\neq v$. If $v$ has an incoming contract protected by $x$ then, by
Lemma~\ref{lem: v with incoming contract protected by x}, all contracts involving $v$ but
not protected by $v$ are protected by $x$. If $v$ has an outgoing contract protected by $x$
then Lemma~\ref{lem: v outgoing x => incoming x} implies that some
incoming contract is protected by $x$, which leads to the same conclusion.

We now consider the claims about the timeout values.
Let $(v,w)$ be an outgoing contract protected by $x$, and let $(u,v)$ be an incoming contract.
Towards contradiction, suppose that in $\protP$ the timeouts of these contracts satisfy $\tau_{uv} \le \tau_{vw}$.
Denote by $t$ the first step of $\protP$ after the contract creation phase.
We consider two cases, depending on whether $(u,v)$ is protected by $x$ or $v$.

\mycase{1} contract $(u,v)$ is protected by $x$. 
Suppose that in $\protP$ the timeout of contract $(u,v)$ is $\tau_{uv} \le \tau_{vw}$.
We consider an execution of $\protP$ where all parties follow $\protP$ until itme $t-1$.
Then, starting at time $t$, we alter the behavior of some parties, as follows: 
all parties other than $v$, $w$ and $x$ will abort the protocol, $x$ will provide its secret $s_x$ to $w$,
and $w$ will claim asset $(v,w)$ at time $\tau_{vw}$. This way,
the earliest $v$ can claim asset $(u,v)$ is at time $\tau_{vw}+1$, which is after 
timeout $\tau_{uv}$ of $(u,v)$. Thus $v$ ends up in an unacceptable outcome,
giving us a contradiction, which completes the proof of~(b).

\mycase{2} contract $(u,v)$ is protected by $v$. We can assume that
its timeout $\tau_{uv}$ is minimum among all incoming contracts protected by $v$.
(Otherwise, in the argument below replace $(u,v)$ by the incoming contract protected by $v$ that has minimum timeout.)
Let $(v,y)$ be any outgoing contract protected by $v$. From part~(b), we have that $\tau_{uv} < \tau_{vy}$.
Let also $(z,v)$ be any incoming contract protected by $x$. 

We now consider an execution of $\protP$ where all parties follow the protocol
until time $t-1$. At time $t$, all parties other than $u,v,x,y,z$ abort the protocol,
and $x$ gives its secret $s_x$ to $w$. As time proceeds, $v$ may notice that
some parties do not follow the protocol, so, even though $v$ is honest, from this time
on it is not required to follow the protocol. We show that independently of
$v$'s behavior, it will end up in an unacceptable outcome, contradicting the safety property of $\protP$.

To this end, we consider two possibilities. If $v$ does not claim $(u,v)$ at or before time $\tau_{uv}$, 
then $w$ can claim $(v,w)$ at time $\tau_{vw}$, so $v$ will lose asset $(v,w)$ without
getting asset $(u,v)$. 
On the other hand, if $v$ claims $(u,v)$, then $u$ can give secret $s_v$ to $y$
that can then claim asset $(v,y)$, and $w$ will not claim asset $(v,w)$, so $v$
will not be able to claim asset $(z,v)$, as it will not have secret $s_x$.
In both cases, the outcome of $v$ is unacceptable.
\end{proof}


We now use the above properties to establish some global properties of $\protP$.
These properties involve  paths and cycles in $G$. The first corollary 
extends Corollary~\ref{cor: paths and cycles creation times}, and
is a direct consequence of Theorem~\ref{thm: local properties}(b).

\begin{corollary}\label{cor: paths and cycles timeouts}
If on some path each contract except possibly first is not protected by the seller,
then along this path the timeout values strictly decrease.
\end{corollary}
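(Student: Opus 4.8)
The plan is to reduce this global, path-level statement to the single-vertex timeout ordering already captured by Theorem~\ref{thm: local properties}. Write the path as $P = u_1,u_2,\ldots,u_k$, so that its contracts are $(u_1,u_2),(u_2,u_3),\ldots,(u_{k-1},u_k)$, and the hypothesis says that every contract on $P$ except possibly the first, $(u_1,u_2)$, is not protected by its seller. It suffices to prove the one-step inequality $\tau_{u_i u_{i+1}} > \tau_{u_{i+1}u_{i+2}}$ for every $i$ with $1\le i\le k-2$; concatenating these inequalities then yields the strictly decreasing chain $\tau_{u_1u_2} > \tau_{u_2u_3} > \cdots > \tau_{u_{k-1}u_k}$.

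To establish the one-step inequality, fix such an $i$ and set $v = u_{i+1}$, an internal vertex of $P$, so that $i+1\ge 2$. The contract $(u_{i+1},u_{i+2}) = (v,u_{i+2})$ is an outgoing contract of $v$, and by hypothesis (since $i+1\ge 2$) it is not protected by its seller $v$. Because every HTLC is secured by the hashlock of some party that created a secret/hashlock pair, this outgoing contract is protected by some party $x\neq v$. Theorem~\ref{thm: local properties}(c) now applies: $x$ is the unique party other than $v$ that protects any contract involving $v$, and all timeouts of the outgoing contracts of $v$ protected by $x$ are strictly smaller than the timeouts of all incoming contracts of $v$, regardless of which party protects them. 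In particular, applying this to the incoming contract $(u_i,u_{i+1}) = (u_i,v)$ gives $\tau_{u_{i+1}u_{i+2}} < \tau_{u_i u_{i+1}}$, which is exactly the desired one-step inequality.

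Since this argument applies verbatim to each internal vertex $u_{i+1}$ of $P$, concatenating the inequalities proves the corollary. I expect no genuine obstacle: the statement is essentially a bookkeeping consequence of the local timeout ordering in Theorem~\ref{thm: local properties}(c). The only point requiring minor care is the indexing convention --- checking that ``each contract except possibly the first'' leaves every outgoing contract of an \emph{internal} vertex unprotected by its seller, which is precisely what guarantees the existence of a party $x\neq v$ and hence the applicability of part~(c).
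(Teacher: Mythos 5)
Your proof is correct and takes essentially the same route as the paper, which presents this corollary as an immediate consequence of Theorem~\ref{thm: local properties}: your reduction to the one-step inequality $\tau_{u_i u_{i+1}} > \tau_{u_{i+1}u_{i+2}}$ at each internal vertex is exactly the intended argument, just written out in full. One remark in your favor: the paper cites part~(b) of that theorem, but part~(b) concerns contracts protected by $v$ itself and orders the timeouts the opposite way (an incoming contract protected by $v$ has \emph{smaller} timeout than the outgoing ones protected by $v$); the statement that actually forces the timeouts to decrease along the path is part~(c), which is precisely the part you invoke, so the paper's reference to~(b) appears to be a typo and your attribution is the correct one.
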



The next corollary follows
from Corollaries~\ref{cor: paths and cycles creation times} and~\ref{cor: paths and cycles timeouts}.

\begin{corollary}\label{cor: paths and cycles with same hashlock}
Let $P$ be a path such that all contracts on $P$ except possibly first are protected by a party $x$ 
that is not an internal vertex of $P$.
Then all contract creation times along $P$ strictly increase and all
timeout values strictly decrease.
\end{corollary}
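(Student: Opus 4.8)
The plan is to derive this corollary directly from Corollaries~\ref{cor: paths and cycles creation times}(a) and~\ref{cor: paths and cycles timeouts}, by checking that the hypothesis stated here is a strengthening of the common premise shared by those two results, namely that each contract on the path except possibly the first is not protected by its own seller. Once this reduction is in place, the two cited corollaries supply the two conclusions (strictly increasing creation times and strictly decreasing timeouts) verbatim, so no new argument about protocol executions is needed.

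Writing $P = u_1, u_2, \ldots, u_k$, the contracts on $P$ are $(u_1,u_2),(u_2,u_3),\ldots,(u_{k-1},u_k)$, and its internal vertices are $u_2,\ldots,u_{k-1}$. First I would fix an arbitrary non-first contract on $P$, say $(u_i,u_{i+1})$ with $2 \le i \le k-1$. Its seller is $u_i$, which is an internal vertex of $P$. By hypothesis this contract is protected by $x$, and $x$ is not an internal vertex of $P$; hence $x \neq u_i$, so $(u_i,u_{i+1})$ is protected by a party other than its seller. Since this holds for every non-first contract, the path $P$ satisfies exactly the premise ``each contract except possibly the first is not protected by the seller'' required by both cited corollaries.

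Applying Corollary~\ref{cor: paths and cycles creation times}(a) then gives that the contract creation times strictly increase along $P$, and applying Corollary~\ref{cor: paths and cycles timeouts} gives that the timeout values strictly decrease along $P$, which is precisely the claim.

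The only point requiring care—and the single place where the hypothesis ``$x$ is not an internal vertex of $P$'' is actually used—is the identification of the sellers of the non-first contracts: these sellers are exactly the internal vertices $u_2,\ldots,u_{k-1}$, so excluding $x$ from the internal vertices is precisely what guarantees that none of those contracts is self-protected. The first contract $(u_1,u_2)$ needs no attention, since it is allowed to be self-protected and both cited corollaries already exempt the first contract; and there is no difficulty if $x$ happens to coincide with an endpoint $u_1$ or $u_k$, since an endpoint is never the seller of a non-first contract. I therefore expect no genuine obstacle: the statement is a clean specialization of the two preceding corollaries.
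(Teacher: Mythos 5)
Your proof is correct and follows exactly the paper's route: the paper derives Corollary~\ref{cor: paths and cycles with same hashlock} directly from Corollaries~\ref{cor: paths and cycles creation times} and~\ref{cor: paths and cycles timeouts}, and your reduction (observing that the seller of each non-first contract is an internal vertex, hence distinct from $x$, so no such contract is self-protected) is precisely the bridging step the paper leaves implicit.
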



\begin{corollary}\label{cor: paths and hashlocks}
{(a)}
Let $(u,v)$ be a contract protected by some party $x$ other than $v$.
Consider a path $P$ starting with arc $(u,v)$, that 
doesn't contain $x$ as an internal vertex and on which each contract is not protected by its seller.
Then all contracts along $P$ are protected by $x$.

{(b)}
Let $(u,v)$ be a contract protected by some party $x$ other than $u$.
Consider a path $P$ ending with arc $(u,v)$, that 
doesn't contain $x$ as an internal vertex and on which each contract is not protected by its buyer.
Then all contracts along $P$ are protected by $x$.
\end{corollary}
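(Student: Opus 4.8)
The plan is to prove both parts by a straightforward induction along the path $P$, in each step invoking the local structure established in Lemma~\ref{lem: v with incoming contract protected by x}(b) (and, for the dual direction, Lemma~\ref{lem: v outgoing x => incoming x}): once a vertex is known to be incident to a contract protected by $x$, \emph{every} contract touching that vertex is protected either by the vertex itself or by $x$, by Theorem~\ref{thm: local properties}(c). Parts~(a) and~(b) are mirror images of one another (reverse all arcs and swap the roles of buyer and seller), so it suffices to carry out~(a) carefully and then indicate the dual argument for~(b). The one point requiring attention is that, at each vertex where we apply the local structure, that vertex must be different from $x$; this is exactly what the hypothesis ``$x$ is not an internal vertex of $P$'' guarantees.

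For part~(a), I would write $P = u_1 u_2 \ldots u_k$ with $(u_1,u_2) = (u,v)$, so the internal vertices are $u_2,\ldots,u_{k-1}$ and, by hypothesis, $x \notin \braced{u_2,\ldots,u_{k-1}}$. The claim, proved by induction on $i$, is that each contract $(u_i,u_{i+1})$ is protected by $x$ for $i = 1,\ldots,k-1$. The base case $i=1$ is the assumption that $(u,v)$ is protected by $x$. For the inductive step, suppose $(u_{i-1},u_i)$ is protected by $x$ with $2 \le i \le k-1$. Since $u_i$ is internal we have $x \neq u_i$, so $u_i$ has an incoming contract, namely $(u_{i-1},u_i)$, protected by a party different from itself; Lemma~\ref{lem: v with incoming contract protected by x}(b) then forces every contract incident to $u_i$ to be protected by $u_i$ or by $x$. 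In particular the outgoing contract $(u_i,u_{i+1})$ is protected by $u_i$ or $x$, but by hypothesis it is not protected by its seller $u_i$, hence it is protected by $x$, completing the step.

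Part~(b) runs the same induction backwards along $P = u_1 \ldots u_k$ with $(u_{k-1},u_k)=(u,v)$: the base case is that the final contract is protected by $x$, and at an internal vertex $u_i$ I would pass from the outgoing contract $(u_i,u_{i+1})$, known to be protected by $x \neq u_i$, to the incoming contract $(u_{i-1},u_i)$, using Lemma~\ref{lem: v outgoing x => incoming x} to obtain an incoming contract of $u_i$ protected by $x$ and then Lemma~\ref{lem: v with incoming contract protected by x}(b) as before; here the hypothesis that no contract on $P$ is protected by its \emph{buyer} is what rules out the alternative that $(u_{i-1},u_i)$ is protected by $u_i$. I expect no genuine obstacle beyond bookkeeping: the whole argument rests on the local dichotomy of Theorem~\ref{thm: local properties}(c), and the only things to verify are that the ``not protected by seller'' (resp.\ ``not protected by buyer'') hypothesis is applied precisely at the step that discards the self-protection case, and that $x$ never coincides with the pivot vertex, which the ``$x$ not internal'' assumption secures.
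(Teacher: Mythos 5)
Your proposal is correct and follows essentially the same route as the paper: a forward induction along $P$ for part~(a) and a backward induction for part~(b), each step invoking the local dichotomy of Theorem~\ref{thm: local properties}(c) (equivalently, Lemma~\ref{lem: v with incoming contract protected by x}(b) together with Lemma~\ref{lem: v outgoing x => incoming x}) at the pivot vertex, with the ``$x$ not internal'' hypothesis ensuring the pivot differs from $x$ and the ``not protected by seller/buyer'' hypothesis discarding the self-protection case. The only difference is cosmetic: you spell out part~(b) in full, whereas the paper dismisses it as symmetric to part~(a).
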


\begin{proof}
(a) The corollary follows easily by repeated application of Theorem~\ref{thm: local properties}.
Let $(v,w)$ be the second arc on $P$. By the assumption, $(v,w)$ is not protected by $v$, and
since $v$ has an incoming contract protected by $x$ and $x\neq v$, Theorem~\ref{thm: local properties} implies that
contract $(v,w)$ must be also protected by $x$. If $w=x$, this must be the end of $P$.
If $w\neq x$, then $w$ has an incoming contract protected by $x$, so we can repeat the same argument for $w$, and so on.
This implies part~(a).

(b) The proof for this part is symmetric to that for part~(a), with the only difference being that
we proceed now backwards from $u$ along $P$.
\end{proof}


\begin{theorem}\label{thm: simple paths to/from x}
{(a)} 
Let $P$ be a simple path starting at a vertex $x$ whose last contract is protected by $x$.
Then all contracts on $P$ are protected by $x$.

{(b)} 
Let $Q$ be a simple path ending at a vertex $x$ whose first contract is protected by $x$.
Then all contracts on $Q$ are protected by $x$.
\end{theorem}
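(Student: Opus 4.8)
The plan is to prove part~(a) in full and to obtain part~(b) as its mirror image. Write $P = u_1 u_2 \cdots u_k$ with $u_1 = x$ and let $c_j = (u_j,u_{j+1})$ be the contract on the $j$-th arc, so the hypothesis is that $c_{k-1}$ is protected by $x$. The case $k=2$ is immediate, so assume $k \ge 3$. First I would re-run the backward induction from the proof of Lemma~\ref{lem: path contracts not protected by seller}: although that lemma assumes the protecting party $z$ lies \emph{off} the path while here $z = x = u_1$ lies on it, the inductive step only needs $u' \ne u_j$ for the protector $u'$ of $c_j$, and this holds for every $j \ge 2$ because $P$ is simple and $u_j \ne u_1 = x$. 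This gives that for each $j \ge 2$ the contract $c_j$ is protected by one of $u_{j+1},\dots,u_{k-1},x$, and in particular \emph{$c_j$ is not protected by its seller $u_j$}. Granting this, the whole statement reduces to the single claim that the \emph{first} arc $c_1 = (x,u_2)$ is protected by $x$: once $c_1$ is $x$-protected, a forward induction finishes the proof, since if $c_{j-1}$ is $x$-protected then (as $x \ne u_j$) Lemma~\ref{lem: v with incoming contract protected by x}(b) forces every contract at $u_j$ to be protected by $u_j$ or $x$, and as $c_j$ is not protected by $u_j$ it must be protected by $x$. (Equivalently one may invoke Corollary~\ref{cor: paths and hashlocks}(a) here.)

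The heart of the argument is therefore showing that $c_1$ is protected by $x$, which I would do by contradiction. If $c_1$ were protected by some party other than $x$, then $c_1$ is also not protected by its seller $x$, so by the previous paragraph \emph{every} arc of $P$ fails to be protected by its seller. Next I would build a return path: since $x$ protects the incoming contract $c_{k-1}$ of $u_k$ and $x \ne u_k$, Lemma~\ref{lem: v with incoming contract protected by x}(a) gives $u_k$ an outgoing $x$-protected contract, and tracing $x$-protected contracts forward (each visited vertex $\ne x$ again has an outgoing $x$-protected contract by Lemma~\ref{lem: v with incoming contract protected by x}(a)) must reach $x$ --- otherwise the trace closes into a cycle all of whose contracts are protected by the external party $x$, contradicting Corollary~\ref{cor: paths and cycles creation times}(b). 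This produces a path $R$ from $u_k$ to $x$ with every arc protected by $x$; since $x$ occurs only as the endpoint of $R$, no arc of $R$ has seller $x$, so no arc of $R$ is protected by its seller. Concatenating $P$ (from $x$ to $u_k$) with $R$ (from $u_k$ to $x$) yields a closed walk in which \emph{no} arc is protected by its seller, hence a directed cycle with the same property, contradicting Corollary~\ref{cor: paths and cycles creation times}(b). Thus $c_1$ is protected by $x$, which completes part~(a).

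The step I expect to be the main obstacle is precisely this last reduction: ruling out that $P$ ``leaves'' the family of $x$-protected contracts through a non-$x$-protected first arc and re-enters it only at the end. The local properties alone permit a single vertex to carry both its own hashlock and a partner's (a nested-bottleneck configuration), so a purely local backward propagation stalls; it is the global construction of the auxiliary all-$x$-protected return path $R$ and the resulting seller-free cycle, played against Corollary~\ref{cor: paths and cycles creation times}(b), that breaks the deadlock. Part~(b) is the mirror of part~(a) under reversing all arc directions, which interchanges the roles of sellers and buyers, of in- and out-neighbors, and of contract creation times and timeouts. Under this reversal each ingredient dualizes: the backward induction becomes a forward one yielding, for each $j \le k-2$, that $c_j$ is not protected by its \emph{buyer}; the reduction targets the last arc into $x$; the auxiliary path is built on the tail side; and the closing contradiction invokes the dual of Corollary~\ref{cor: paths and cycles creation times}(b) --- every cycle contains a contract protected by its \emph{buyer} --- which I would record as the one extra ingredient and justify by the symmetric reasoning on the timeout side through Theorem~\ref{thm: local properties}.
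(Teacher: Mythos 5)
Your part~(a) is correct and is essentially the paper's own argument in a slightly reorganized form: where the paper takes a shortest violating path and feeds its prefix to Lemma~\ref{lem: path contracts not protected by seller}, you re-run that lemma's backward induction directly (correctly observing that the off-path hypothesis on the protecting party is only needed to guarantee $u'\neq u_j$, which simplicity of $P$ supplies for $j\ge 2$) and reduce everything to the first arc; the return path $R$ and the seller-free cycle played against Corollary~\ref{cor: paths and cycles creation times}(b) are the same ingredients the paper uses.

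The gap is in part~(b), and it sits exactly where you placed your confidence: the claim that part~(b) is the ``mirror image'' of part~(a). The model is not seller/buyer symmetric, and the extra ingredient you propose --- every cycle contains a contract protected by its \emph{buyer}, to be justified ``by the symmetric reasoning on the timeout side through Theorem~\ref{thm: local properties}'' --- does not follow from any such symmetric reasoning. The seller-side Corollary~\ref{cor: paths and cycles creation times}(b) rests on Lemma~\ref{lem: creating contracts}: creation times strictly increase along a path whose contracts are not protected by their sellers. The timeout-side mirror of this monotonicity is false. Theorem~\ref{thm: local properties}(c) bounds only the timeouts of $v$'s outgoing contracts protected by the \emph{external} party $x$ below the timeouts of $v$'s incoming contracts; it imposes no bound when the outgoing contract is protected by $v$ itself. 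This is not a weakness of the lemmas but of the proposed monotonicity: in Protocol~$\ProtRDP$, a sub-leader $b_j$ has incoming contracts in its parent component (protected by the parent's hashlock, hence not by their buyer $b_j$) whose timeouts are \emph{smaller} than the timeouts of $b_j$'s outgoing contracts inside its home component, so timeouts may increase across a buyer-unprotected arc. The paper flags precisely this asymmetry (``we do not have an analogue of Corollary~\ref{cor: paths and cycles creation times} for buyers''), and its proof of~(b) is consequently not a mirror of~(a): it decomposes $Q$ into maximal segments each protected by its initial vertex (on a buyer-unprotected path the protector can ``reset'' at a contract protected by its seller, which no buyer-side induction can exclude), builds for each segment a reverse path all of whose contracts are protected by that same vertex, and concatenates these reverse paths with $Q'$ into a cycle on which no contract is protected by its \emph{seller}, contradicting Corollary~\ref{cor: paths and cycles creation times}(b). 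Your dual cycle claim is in fact true, but the only way to prove it with the paper's toolkit is this same segment/reverse-path construction (applied to a cycle rather than to $Q$) --- a reduction back to the seller side, not a symmetry. As written, your part~(b) assumes away the step that constitutes the actual content of the theorem's second half.
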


\begin{proof}
\smallskip
(a)
Let $P = u_1,u_2,...,u_{p+1}$, where $u_1 = x$ and $(u_{p},u_{p+1})$ is protected by $x$. 
The proof is by contradiction. Suppose that $P$ violates part~(a), namely it
contains a contract not protected by $x$. (In particular, this means that $p\ge 2$.)
We can assume that among all simple paths that violate property~(a), $P$ is shortest. 
(Otherwise replace $P$ in the argument below by a shortest violating path.)
Then $(u_{p-1},u_p)$ is not protected by $x$, because otherwise the prefix of $P$
from $x$ to $u_p$ would be a violating path shorter than $P$.
So $(u_{p-1},u_p)$ is protected by $u_p$.
Using Lemma~\ref{lem: path contracts not protected by seller}, each contract
on the path $u_1,u_2,...,u_p$ is not protected by the seller.
Since $(u_p,u_{p+1})$ is protected by $x$ and $x\neq u_p$,
each contract on $P$ is not protected by its seller. 


\begin{figure}[ht]
\begin{center}
\includegraphics[valign=m,width = 3.75in]{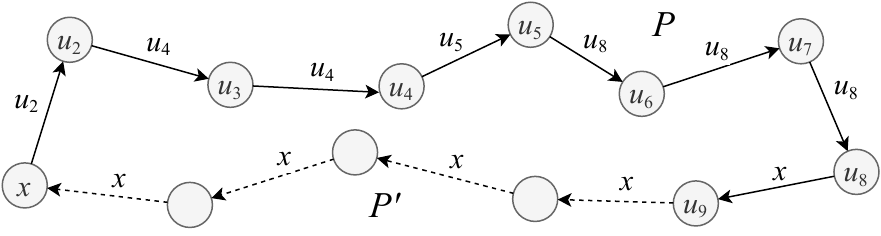}
\end{center}
\caption{Illustration of the proof of Theorem~\ref{thm: simple paths to/from x}(a).
Path $P$ is marked with solid arrows, path $P'$ is marked with dashed arrows.
Here, $p = 8$ and $u_1 = x$.
The labels on edges show the parties that protect them.
}\label{fig: protected_by_x_theorem_part_a}
\end{figure}


Next, we claim that there is 
a simple path $P'$ from $u_{p+1}$ to $x$ whose all contracts are protected by $x$.
If $u_{p+1} = x$, this is trivial, so assume that $u_{p+1}\neq x$.
Then, since $u_{p+1}$ has an incoming contract protected by $x$ and $x\neq u_{p+1}$, $u_{p+1}$ must have an
outgoing contract $(u_{p+1},w)$ protected by $x$. If $w =x$, we are done. Else, we repeat the process for $w$,
and so on. Eventually, extending this path we must end at $x$, for otherwise we would have a cycle with all
contracts protected by $x$ but not containing $x$, contradicting Corollary~\ref{cor: paths and cycles creation times}(b).
This proves that such path $P'$ exists. Since all contracts on $P'$ are protected by $x$,
they are not protected by their sellers. 

Finally, let $C$ be the cycle obtained by combining paths $P$ and $P'$. 
(See Figure~\ref{fig: protected_by_x_theorem_part_a}.) As shown
above, every contract on $C$ is not protected by its seller.
But this contradicts Corollary~\ref{cor: paths and cycles creation times}, completing 
the proof of part~(a).

\smallskip
(b)
Let $Q = v_0,v_1,...,v_q$, where $v_q = x$ and $(v_0,v_1)$ is protected by $x$.
Towards contradiction, suppose that some contract on $Q$ is not protected by $x$.
Analogously to part~(a) we can make two assumptions: One, we can assume that $q\ge 2$ and $(v_1,v_2)$ is not protected by $x$,
which implies that it is protected by $v_1$. Two, we can assume that there is a simple path $Q'$
from $v_1$ to $x$ with all contracts protected by $x$.

The rest of the proof is also similar to the proof of~(a), but with a twist (because we do not have
an analogue of Corollary~\ref{cor: paths and cycles creation times} for buyers).
The idea is to divide $Q$ into segments protected by the same party, for each segment take a reverse
path protected by this party, and combine these reverse paths into a path $T$ from $v_q$ to $v_1$, with each contract not protected by its seller.

We now describe this construction.
By the argument identical (except for reversing direction) to that in Lemma~\ref{lem: path contracts not protected by seller},
each contract $(v_i,v_{i+1})$, for $i \ge 1$, is protected by one of parties $v_1,v_2,...,v_i$.
Further, for each party $v_i$, where $i \notin \braced{0, q-1}$, 
the contracts protected by $v_i$ (if any) form a segment of $Q$ starting at $v_i$. 
Let $A$ consist of all indices $i$ for which contract $(v_i,v_{i+1})$ is protected by $v_i$, and also include $q$ in $A$.
Order $A$ in increasing order $a_1 < a_2 < ... < a_s$. So $a_1 = 1$, $a_s = q$, and for each $r = 1,2,...,s-1$
the segment $v_{a_r},v_{a_r+1},....,v_{a_{r+1}}$ of $Q$ has all contracts protected by $v_{a_r}$.


\begin{figure}[ht]
\begin{center}
\includegraphics[valign=m,width = 5in]{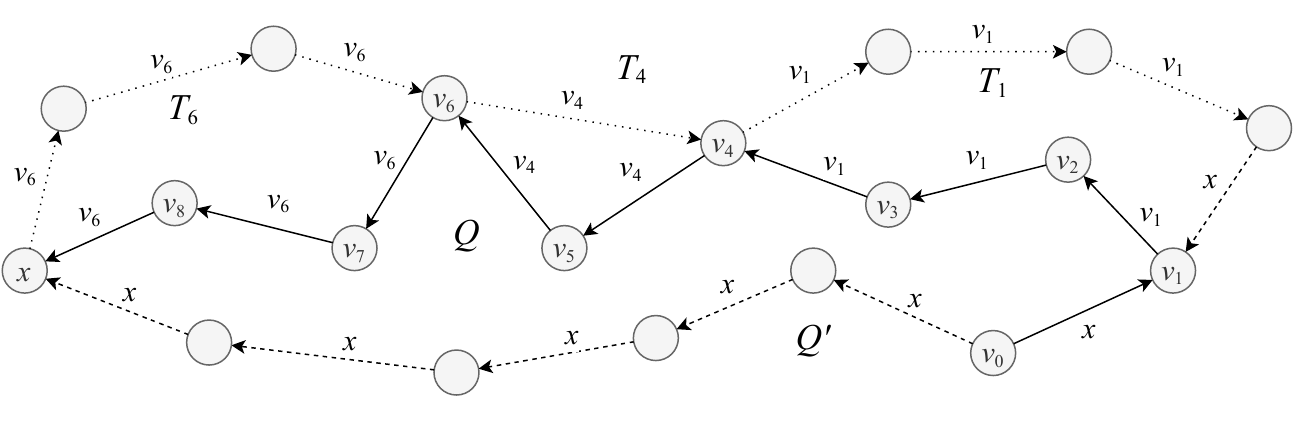}
\end{center}
\caption{Illustration of the proof of Theorem~\ref{thm: simple paths to/from x}(b).
Path $Q$ is marked with solid arrows. Here, $q=9$ and $v_9 = x$.
Path $Q'$ is marked with dashed arrows.
$Q$ is divided into three segments, protected by $v_1$, $v_4$, and $v_6$, respectively.
So $a_1 = 1$, $a_2 = 4$, $a_3 = 6$, and $a_4 = 9$.
The corresponding reverse paths are $T_1$, $T_4$ and $T_6$, marked with dotted arrows.
}\label{fig: protected_by_x_theorem_part_b}
\end{figure}


For each $r = 2,3,...,s$, contract $(v_{a_r-1},v_{r_a})$ is protected by $v_{r_a-1}$. 
By the same argument as in~(a), we then obtain that there is a simple path $T_{a_r-1}$ from $v_{r_a}$ to $v_{a_r-1}$
with all contracts protected by $v_{a_r-1}$, so none protected by its seller. The concatenation of these paths forms a path $T$
from $v_{r_s} = v_q$ to $v_{a_1} = v_1$.  So each contract on $T$ is not protected by its seller.

To finish, combine the two paths $T$ and $Q'$ into a cycle $C$.
On $C$, each contract is not protected by the seller.
This contradicts Corollary~\ref{cor: paths and cycles creation times}, completing the proof.
\end{proof}


\subsection{Proof of the $\boldsymbol{(\Leftarrow)}$ Implication in Theorem~\ref{thm: main theorem}}
\label{subsec: proof of characterization}



In this section we use the protocol properties established in the previous section
to prove necessary conditions for a digraph to admit an atomic
HTLC protocol. We start with protocols that use only one common hashlock for the whole graph.


\begin{lemma}\label{lem: one hashlock => bottleneck}
Suppose that $G$ has an atomic HTLC protocol $\protP$ in which only one party creates
a secret/hashlock pair.  Then $G$ must be a bottleneck graph.
\end{lemma}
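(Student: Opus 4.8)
The plan is to exploit the single-hashlock hypothesis to collapse the protection structure to a single vertex, and then read off the bottleneck property directly from Corollary~\ref{cor: paths and cycles creation times}(b). First I would observe that if $x$ is the unique party that creates a secret/hashlock pair, then $h_x$ is the only hashlock value available in the entire protocol. By the definition of HTLC-based protocols, every created contract must be secured with the hashlock $h_y$ of \emph{some} party $y$ that created a secret/hashlock pair; since $x$ is the only such party, every contract in the conforming execution of $\protP$ is protected by $x$.

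The key step is then to identify which contracts are protected by their own seller. A contract $(u,v)$ protected by $x$ is protected by its seller precisely when $u=x$. Now I would invoke Corollary~\ref{cor: paths and cycles creation times}(b), which asserts that every cycle of $G$ must contain at least one contract protected by its seller. Combining this with the previous observation, every cycle $C$ of $G$ must contain a contract $(u,v)$ with $u=x$; in particular, $x$ lies on $C$. Since $C$ was an arbitrary cycle, $x$ belongs to every cycle of $G$, i.e.\ $\braced{x}$ is a singleton feedback vertex set.

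Finally I would conclude: by the definition of a bottleneck vertex, $x$ is a bottleneck vertex of $G$, and since $G$ is strongly connected (as assumed throughout this section), $G$ is a bottleneck graph by definition, which is exactly the statement of the lemma.

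I do not expect a genuine obstacle here: all of the technical work has already been absorbed into Corollary~\ref{cor: paths and cycles creation times}(b) (itself a consequence of Lemma~\ref{lem: creating contracts}), so the lemma is essentially an immediate corollary once one notices that the single-hashlock restriction forces every contract to be protected by $x$. The only point requiring care is the clean translation ``contract protected by its seller $\iff$ seller is $x$,'' which relies on the fact that $x$ is the unique protector; this is where I would be explicit to avoid any ambiguity about contracts that $x$ creates versus contracts that merely carry $h_x$.
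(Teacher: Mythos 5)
Your proof is correct and follows essentially the same route as the paper: the paper's own argument likewise notes that every contract must carry the unique hashlock $h_x$, invokes Corollary~\ref{cor: paths and cycles creation times}(b) to conclude every cycle passes through $x$, and uses strong connectivity to conclude $G$ is a bottleneck digraph. Your write-up merely makes explicit the step ``protected by its seller $\iff$ seller is $x$'' that the paper leaves implicit, which is a fine clarification.
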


\begin{proof}
Strong connectivity is a required property for \emph{any} atomic protocol to exist, see~\cite{Herlihy18}.
If $x$ is the party that created the secret/hashlock pair, then
Corollary~\ref{cor: paths and cycles creation times} implies that any cycle in $G$ must contain $x$.
Thus $x$ is a bottleneck of $G$, proving the theorem.
\end{proof}


We now consider the general case, when all parties are allowed to create
secret/hashlock pairs. The lemma below establishes the $(\Rightarrow)$ implication in Theorem~\ref{thm: main theorem}.

\begin{lemma}\label{lem: many hashlocks => reuniclus}
Suppose that $G$ has an atomic HTLC protocol $\protP$.
Then $G$ must be a reuniclus digraph.
\end{lemma}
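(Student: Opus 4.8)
The plan is to read off the reuniclus decomposition directly from the hashlock structure of $\protP$. Call a vertex $x$ a \emph{leader} if it protects at least one contract, and for any arc let its \emph{protector} be the unique party whose hashlock secures its contract (by the definition of HTLC-based protocols, together with liveness, every arc has exactly one protector). The first step is to show that \emph{every protector is a leader that protects both an incoming and an outgoing contract of itself}: if $(u,v)$ is protected by $x$ with $x\neq v$, then $v$ has an outgoing $x$-protected contract by Theorem~\ref{thm: local properties}(a), and repeatedly following $x$-protected outgoing arcs cannot close a cycle avoiding $x$ (Corollary~\ref{cor: paths and cycles creation times}(b)), so the walk returns to $x$; hence $x$ has an incoming $x$-protected contract and, by Lemma~\ref{lem: v incoming v iff outgoing v}, an outgoing one. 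Let $B=\{b_1,\dots,b_p\}$ be the set of leaders. Since every cycle carries a contract protected by its seller (Corollary~\ref{cor: paths and cycles creation times}(b)), and that seller is a leader, $B$ is a feedback vertex set. I would then partition the arcs of $G$ by protector, letting $H_j$ be the subgraph of arcs protected by $b_j$ and $V_j$ its vertex set.

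Next I would establish property~(rg1). Repeating the ``follow $b_j$-protected arcs forward, and backward'' argument shows that every $v\in V_j\setminus\{b_j\}$ lies on a $b_j$-protected walk from $b_j$ to $b_j$ (Theorem~\ref{thm: local properties}(a) guarantees continuation, Corollary~\ref{cor: paths and cycles creation times}(b) forces the return to $b_j$), so $H_j$ is strongly connected; moreover any cycle in $H_j$ carries a contract protected by its seller, which among $b_j$-protected contracts can only be an outgoing contract of $b_j$, so $b_j$ lies on every cycle of $H_j$. Thus $H_j$ is a bottleneck digraph with bottleneck $b_j$. To see that $H_j$ is in fact an \emph{induced} subgraph $G_j$, I would verify that any arc $(u,v)$ of $G$ with $u,v\in V_j$ is $b_j$-protected: every vertex of $V_j\setminus\{b_j\}$ has $b_j$ as its unique foreign protector (Theorem~\ref{thm: local properties}(c)), so an arc between two such vertices protected by anyone other than $b_j$ would either create a second foreign protector at $u$ or make some non-bottleneck vertex shared between two components; the cases $u=b_j$ or $v=b_j$ are handled the same way, using that an out- or in-arc of $b_j$ not protected by $b_j$ is protected by $b_j$'s single foreign protector. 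The same uniqueness statement shows $V_i\cap V_j\subseteq\{b_i,b_j\}$ for $i\neq j$, so distinct components can meet only at a bottleneck vertex.

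To build the control tree $\controlgraph$ I would orient the overlaps: for a leader with a foreign protector, define its \emph{parent} to be that (unique, by Theorem~\ref{thm: local properties}(c)) foreign protector, and call a leader with no foreign protector a root. The crucial point is acyclicity of this parent map, which I would prove with a strictly monotone invariant. For each leader $b_j$ let $M_j$ be the smallest timeout among the outgoing $b_j$-protected contracts. If $b_i$ is the parent of $b_j$, then $b_j$ owns an outgoing $b_i$-protected contract whose timeout is, by Theorem~\ref{thm: local properties}(c), below every incoming timeout of $b_j$, hence below the smallest incoming $b_j$-protected timeout, which by Theorem~\ref{thm: local properties}(b) is below $M_j$; since that same contract is an outgoing $b_i$-protected contract its timeout is at least $M_i$, giving $M_i<M_j$. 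Thus the parent map strictly decreases $M$, so it has no cycles and defines a forest. Finally, because the components are arc-disjoint, cover all arcs, and share vertices only along the parent relation, weak connectivity of $G$ forces this forest to be a single tree rooted at the unique parent-free leader $b_1=\leader$. Setting $G_j:=H_j$ then yields exactly properties~(rg1) and~(rg2), so $G$ is a reuniclus digraph.

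The main obstacle is the acyclicity of the parent relation: the rest is local bookkeeping built on Theorem~\ref{thm: local properties} and Corollary~\ref{cor: paths and cycles creation times}, but ruling out a cyclic chain of foreign protectors requires exhibiting the right global monotone quantity. The timeout minima $M_j$ work because Theorem~\ref{thm: local properties}(b),(c) together pin down how the two hashlocks meeting at a leader must be ordered in time; I expect verifying the induced-subgraph claim in~(rg1) to be the most case-heavy part.
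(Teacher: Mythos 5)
You follow the same architecture as the paper's proof: partition the arcs by protector, show each class $H_j$ is a strongly connected bottleneck subgraph with bottleneck $b_j$ (forward/backward propagation via Theorem~\ref{thm: local properties}(a) plus Corollary~\ref{cor: paths and cycles creation times}(b)), show the classes are induced and overlap only at bottleneck vertices (uniqueness of the foreign protector, Theorem~\ref{thm: local properties}(c)), and organize the components by the parent relation. Within this shared skeleton, your single-tree argument is a clean alternative: the paper takes two hypothetical roots $r,r'$ and derives a contradiction from a simple $r$-to-$r'$ path via Lemma~\ref{lem: path contracts not protected by seller}, whereas you get connectedness from the facts that the components are arc-disjoint, cover all arcs, and meet only along parent links. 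Those parts are correct.

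The gap is in the acyclicity step, which you yourself flag as the crux: the chain $M_i \le \tau_{b_jw} < M_j$ does not hold under any single reading of $M$. If $M_j$ is the minimum timeout over contracts that are outgoing \emph{from $b_j$} and protected by $b_j$ (the reading your appeal to Theorem~\ref{thm: local properties}(b) requires), then the inequality $\tau_{b_jw}\ge M_i$ is false: timeouts strictly decrease along $b_i$-protected paths leaving $b_i$ (Corollary~\ref{cor: paths and cycles timeouts}), so a contract such as $(b_j,w)$ lying deeper in $b_i$'s component can have timeout \emph{smaller} than every outgoing contract of $b_i$ --- e.g., if $H_i$ is the triangle $b_i\to b_j\to w\to b_i$, then $\tau_{b_jw}<\tau_{b_ib_j}=M_i$. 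If instead $M_j$ is the minimum over \emph{all} $b_j$-protected contracts (the reading that makes $\tau_{b_jw}\ge M_i$ trivial), then Theorem~\ref{thm: local properties}(b) gives you nothing: the smallest incoming $b_j$-protected timeout is itself a member of the minimized set, so it is $\ge M_j$, not $<M_j$. The invariant can be repaired under the second reading: every $b_j$-protected contract lies on a $b_j$-protected path ending at $b_j$ (this is exactly your strong-connectivity argument), timeouts strictly decrease along that path (Corollary~\ref{cor: paths and cycles timeouts}), and its last contract is an incoming contract of $b_j$, whose timeout exceeds $\tau_{b_jw}$ by Theorem~\ref{thm: local properties}(c); hence $M_j>\tau_{b_jw}\ge M_i$ and your potential argument goes through. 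Note that the paper avoids timeouts here entirely: a cycle in the control relation is converted, via Theorem~\ref{thm: simple paths to/from x}(b), into a closed walk in $G$ on which no contract is protected by its seller, contradicting Corollary~\ref{cor: paths and cycles creation times}(b).
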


\begin{proof}
Recall what we have established so far in Section~\ref{subsec: basic properties of shl-based protocols}.
From Theorem~\ref{thm: local properties}(c)
we know that, for each party $u$, all contracts involving $u$ as a party are 
protected either by $u$ or by just one other party. Using this property, we
define the control relation on parties, as follows:
If $u$ has any contracts protected by some other party $x$,
we will say that $x$ \emph{controls} $u$. Let $\controlgraph$ be a digraph whose vertices are
the parties that created secret/hashlock pairs,
and each arc $(u,x)$ represents the control relation, meaning that $x$ controls $u$.
We want to prove that $\controlgraph$ is a tree.

Each node in $\controlgraph$ has at most one outgoing arc. 
This property already implies that each connected component of $\controlgraph$ is a so-called \emph{1-tree},
namely a graph that has at most one cycle. So in order to show that $\controlgraph$ is actually a tree, it is
sufficient to show the two claims below.


\begin{claim}\label{cla: no cycles on control graph}
$\controlgraph$ does not have any cycles.
\end{claim}

We now prove Claim~\ref{cla: no cycles on control graph}. Towards contradiction,
suppose that $\controlgraph$ has a cycle, say $C = v_1,v_2,...,v_k,v_{k+1}$, where $v_{k+1} = v_1$.
Consider any arc $(v_i,v_{i+1})$ on $C$. This arc represents that $v_{i+1}$ controls $v_i$.
So, in $G$, $v_i$ has an outgoing contract protected by $v_{i+1}$.
Let $P_i$ be any path in $G$ starting with this contract and ending at $v_{i+1}$.
Then Theorem~\ref{thm: simple paths to/from x}(b) gives us that all contracts on $P_i$ protected by $v_{i+1}$.
Combining these paths $P_1,...,P_k$ we obtain a cycle $C'$ in $G$.
Then in $C'$, each contract is not protected by its seller, which would contract Corollary~\ref{cor: paths and cycles creation times}(b).
This completes the proof of Claim~\ref{cla: no cycles on control graph}.


\begin{claim}\label{cla: control graph connected}
$\controlgraph$ has only one tree.
\end{claim}

From Claim~\ref{cla: no cycles on control graph} 
we know that each (weakly) connected component of $\controlgraph$ is a tree. The roots of these trees
have the property that they are not protected by any other party.
To prove Claim~\ref{cla: control graph connected}, suppose towards contradiction that $\controlgraph$ has two
different trees, and denote by $r$ and $r'$ the roots of these trees. Since $r,r'$ are roots of
trees,  all contracts involving $r$ are protected by $r$ and all contracts involving $r'$ are protected by $r'$.
Consider any simple path $P = u_1,u_2,...,u_k$ from $u_1 = r$ to $r' = u_k$. 
Since the last contract on $P$ is protected by $r'$ and $r'$ is not in $\braced{u_1,u_2,...,u_{k-1}}$,
Lemma~\ref{lem: path contracts not protected by seller} implies that all contracts on this path
are not protected by their sellers. But this contradicts the fact that $(u_1,u_2)$ is protected by $u_1$.
This completes the proof of Claim~\ref{cla: control graph connected}.

\smallskip

We now continue with the proof of the theorem. Denote by $b_1,b_2,...,b_p$ the nodes of $\controlgraph$. 
For each $b_j$, define $G_j$ to be the subgraph induced by the contracts protected by $b_j$.
That is, for each contract $(u,v)$ protected by $b_j$ we add vertices $u$, $v$ and arc $(u,v)$ to $G_j$.
The necessary properties~(rg1) and~(rg2) follow from our results in Section~\ref{subsec: basic properties of shl-based protocols}.
It remains to show that subgraphs graphs $G_1,G_2,...,G_p$, together with tree $\controlgraph$,
satisfy conditions~(rg1) and~(rg2) that characterize reuniclus graphs.

Consider some $u\neq b_j$ that is in $G_j$. By the definition of $G_j$, $u$ is involved in a contract protected by $b_j$. 
Then Theorem~\ref{thm: local properties} gives us that $u$ has both an incoming and outgoing contract protected by $b_j$.
Take any path $P$ starting at an outgoing contract of $u$ protected by $b_j$ and ending at $b_j$.
Then Theorem~\ref{thm: simple paths to/from x}(b) implies that all contracts on $P$ are protected by $b_j$.
By the same reasoning, there is a path $P'$ starting at $b_j$ and ending with an incoming contract
of $u$ protected by $b_j$. Then, by Theorem~\ref{thm: simple paths to/from x}(a) all contracts on $P'$ are protected by $b_j$. 
This gives us that $G_j$ is strongly connected. 
And, by Corollary~\ref{cor: paths and cycles creation times},
$G_j$ cannot contain a cycle not including $b_j$. Therefore $G_j$ is a bottleneck graph with $b_j$ as its bottleneck.

We also need to prove that $G_j$ is in fact an induced subgraph, that is, if $u,v\in G_j$ and $G$
has an arc $(u,v)$, then $(u,v)\in G_j$ as well. That is, we need to prove that $(u,v)$ is protected by $b_j$.
Suppose, towards, contradiction, that $(u,v)$ is protected by some $b_i\neq b_j$.
Then both $u$ and $v$ are involved in contracts protected by both $b_i$ and $b_j$, and this implies that
$u = b_i$ and $v= b_j$, or vice versa. And this further implies that $b_i$ would be protected by $b_j$ and
vice versa, which would be a cycle in $\controlgraph$,
contradicting that $\controlgraph$ is a tree. This completes the proof of property~(rg1).

Finally, consider property~(rg2). If a vertex $u$ is not any of designated bottlebeck vertices $b_1,b_2,...,b_p$,
then, by Theorem~\ref{thm: local properties}, all its contracts are protected by the same party, which means
that it belongs to exactly one graph $G_j$.
On the other hand, if $u = b_j$, then again by Theorem~\ref{thm: local properties}, it 
is involved in only in contracts protected by itself and one other party, say $b_i$.
But then it belongs only to $G_j$ and $G_i$, completing the proof of~(rg2).
\end{proof}



\bibliographystyle{plain}
\bibliography{chain_swaps_references.bib}

\begin{thebibliography}{10}

\bibitem{asokan97}
N.~Asokan, Matthias Schunter, and Michael Waidner.
\newblock Optimistic protocols for fair exchange.
\newblock In {\em Proceedings of the 4th ACM Conference on Computer and
  Communications Security}, CCS '97, pages 7--17, New York, NY, USA, 1997. ACM.

\bibitem{asokan1997}
N.~Asokan, Victor Shoup, and Michael Waidner.
\newblock Optimistic fair exchange of digital signatures.
\newblock {\em IEEE Journal on Selected Areas in Communications}, 18:593--610,
  1997.

\bibitem{probfairexchange}
Michael Ben-Or, Oded Goldreich, Silvio Micali, and Ronald~L. Rivest.
\newblock A fair protocol for signing contracts.
\newblock In Wilfried Brauer, editor, {\em Automata, Languages and
  Programming}, pages 43--52, Berlin, Heidelberg, 1985. Springer Berlin
  Heidelberg.

\bibitem{2023_chan_etal_with_preferences}
Eric Chan, Marek Chrobak, and Mohsen Lesani.
\newblock Cross-chain swaps with preferences.
\newblock In {\em 36th {IEEE} Computer Security Foundations Symposium, {CSF}
  2023, Dubrovnik, Croatia, July 10-14, 2023}, pages 261--275. {IEEE}, 2023.

\bibitem{deshpande2020privacy}
Apoorvaa Deshpande and Maurice Herlihy.
\newblock Privacy-preserving cross-chain atomic swaps.
\newblock In {\em International Conference on Financial Cryptography and Data
  Security}, pages 540--549. Springer, 2020.

\bibitem{franklin98}
Matt Franklin and Gene Tsudik.
\newblock Secure group barter: Multi-party fair exchange with semi-trusted
  neutral parties.
\newblock In Rafael Hirchfeld, editor, {\em Financial Cryptography}, pages
  90--102, Berlin, Heidelberg, 1998. Springer Berlin Heidelberg.

\bibitem{heilman2020arwen}
Ethan Heilman, Sebastien Lipmann, and Sharon Goldberg.
\newblock The arwen trading protocols.
\newblock In {\em International Conference on Financial Cryptography and Data
  Security}, pages 156--173. Springer, 2020.

\bibitem{Herlihy18}
Maurice Herlihy.
\newblock Atomic cross-chain swaps.
\newblock In {\em Proceedings of the 2018 ACM Symposium on Principles of
  Distributed Computing}, PODC '18, pages 245--254, New York, NY, USA, 2018.
  ACM.

\bibitem{herlihy2019cross}
Maurice Herlihy, Barbara Liskov, and Liuba Shrira.
\newblock Cross-chain deals and adversarial commerce.
\newblock {\em arXiv preprint arXiv:1905.09743}, 2019.

\bibitem{2019_imoto_etal_cross_chain_improved}
Soichiro Imoto, Yuichi Sudo, Hirotsugu Kakugawa, and Toshimitsu Masuzawa.
\newblock Atomic cross-chain swaps with improved space and local time
  complexity, 2019.

\bibitem{mazumdar2022faster}
Subhra Mazumdar.
\newblock Towards faster settlement in htlc-based cross-chain atomic swaps,
  2022.

\bibitem{micali03}
Silvio Micali.
\newblock Simple and fast optimistic protocols for btcwiki electronic exchange.
\newblock In {\em Proceedings of the Twenty-second Annual Symposium on
  Principles of Distributed Computing}, PODC '03, pages 12--19, New York, NY,
  USA, 2003. ACM.

\bibitem{thyagarajan2021universal}
Sri~AravindaKrishnan Thyagarajan, Giulio Malavolta, and Pedro
  Moreno-S{\'a}nchez.
\newblock Universal atomic swaps: Secure exchange of coins across all
  blockchains.
\newblock {\em Cryptology ePrint Archive}, 2021.

\bibitem{2013_nolan_atomictransfertalk}
{Tier Nolan}.
\newblock Alt chains and atomic transfers.
\newblock
  \url{https://bitcointalk.org/index.php?topic=193281.msg2224949#msg2224949},
  2013.
\newblock [Online; accessed 23-January-2021].

\bibitem{2021_xue_herlihy_hedging}
Yingjie Xue and Maurice Herlihy.
\newblock Hedging against sore loser attacks in cross-chain transactions.
\newblock In {\em Proceedings of the 2021 ACM Symposium on Principles of
  Distributed Computing}, PODC'21, page 155–164, New York, NY, USA, 2021.
  Association for Computing Machinery.

\end{thebibliography}




\end{document}